\documentclass{llncs}


\usepackage[utf8]{inputenc}
\usepackage{amssymb,amsmath,amsfonts}
\usepackage[all]{xy}
\SelectTips{cm}{}
\usepackage{fancybox}
\usepackage{color}
\usepackage{graphicx}
\usepackage{url}
\usepackage{xspace}
\usepackage{enumitem} 
\usepackage{wrapfig} 

\newcommand{\To}{\Rightarrow}
\newcommand{\rto}{\rightarrow}
\newcommand{\uprto}[1]{\stackrel{#1}{\rto}}
\newcommand{\lto}{\leftarrow}
\newcommand{\uplto}[1]{\stackrel{#1}{\lto}}
\newcommand{\monoto}{\rightarrowtail}
\newcommand{\upmonoto}[1]{\stackrel{#1}{\monoto}}

\newcommand{\parto}{\rightharpoonup} 
\newcommand{\slice}{\downarrow}

\newcommand{\Set}{\ensuremath{\mathbf{Set}}}

\newcommand{\catC}{\ensuremath{\mathbf{C}}}
\newcommand{\catP}{\ensuremath{\mathbf{P}_\mathbf{C}^\mathcal{M}}}

\newcommand{\id}{\mathit{id}}
\newcommand{\Id}{\mathit{Id}}
\newcommand{\ol}{\overline}

\newcommand{\olj}{\overline{\jmath}}
\newcommand{\olt}{\overline{t}}
\newcommand{\olm}{\overline{m}}
\newcommand{\oln}{\overline{n}}

\newcommand{\pb}{PB\xspace}  
\newcommand{\fpbc}{FPBC\xspace}  
  
\newcommand{\sqpo}{SqPO\xspace}  
\newcommand{\pbcpo}{{\sc AGREE}\xspace}  
\newcommand{\agree}{\pbcpo} 


\newcommand{\hide}[1]{}

\newcommand{\grpol}[1]{\mathbb{#1}}   
\newcommand{\Grpol}{\mathbf{Gr^\pm}} 
\newcommand{\depol}{\mathrm{Depol}} 
\newcommand{\Gr}{\mathbf{Gr}} 
\newcommand{\polar}{\mathrm{Pol}}
\newcommand{\spa}[5]{#1\stackrel{#2}{\leftarrow}#3\stackrel{#4}{\rightarrow}#5}
\newcommand{\psqpo}{PSqPO\xspace}
 
\newcommand{\step}[3]{\xymatrix{#1 \ar@{=>}[r]^{\mathrm{#2}} & #3 \\ } }

\newcommand{\stepp}[4]{\xymatrix{#1 \ar@{=>}[r]^{\mathrm{#2}}_{#3} & #4 \\ } }

\pagestyle{plain}
\newcommand{\ourbot}{*}
\newtheorem{fact}{Fact}
\newcommand{\init}{0}
\newcommand{\embspec}{embedding}
\newdir{ >}{*{}!/-.2em/\dir{>}}
\newcommand{\bx}[1]{\phantom{\big(}#1{\phantom{\big)}}}
\newcommand{\M}{\ensuremath{\mathcal{M}}}
\newcommand{\Str}{\ensuremath{\mathcal{S}}}
\newcommand{\dotarrow}{
   \mathrel{\ooalign{\hss\raise1ex\hbox{\scalebox{1.25}{\normalfont .}}%
   \kern0.35ex\hss\cr$\rightarrow$}}}
   
\newcommand{\eqid}[1]{\stackrel{\langle#1\rangle}{=}}
\newcommand{\ideq}[1]{\ensuremath{\langle#1\rangle}}
\newcommand{\sm}{\setminus}

\newcommand{\unit}{u}
\newcommand{\counit}{c}
\newcommand{\wh}{\widehat}  
\newcommand{\TT}{\mathbb{T}}
\newcommand{\ini}{0} 
\newcommand{\fin}{1} 
\newcommand{\final}{1} 
\newcommand{\inifin}{!} 
\newcommand{\ttrue}{\mathit{true}} 
\newcommand{\ffalse}{\mathit{false}} 
\newcommand{\olinifin}{\ol{\,!\,}}
\newcommand{\comment}[1]{}
\begin{document}


\title{AGREE -- Algebraic Graph Rewriting \\ with Controlled
  Embedding\thanks{This work has been partly funded by projects CLIMT
    (ANR/(ANR-11-BS02-016), TGV (CNRS-INRIA-FAPERGS/(156779 and
    12/0997-7)), VeriTeS (CNPq 485048/2012-4 and 309981/2014-0), PEPS
    \'egalit\'e (CNRS).}\\
  {\small (Long Version)}}
\author{A. Corradini\inst{1} and D. Duval\inst{2} and R. Echahed\inst{3} 
and F. Prost\inst{3} and L. Ribeiro\inst{4}}
\institute{
Dipartimento di Informatica, Universit\`a di Pisa
\and LJK - Universit\'e de Grenoble Alpes and CNRS
\and LIG - Universit\'e de Grenoble Alpes and CNRS
\and INF - Universidade Federal do Rio Grande do Sul}

\maketitle

\begin{abstract}
  The several algebraic approaches to graph transformation proposed in
  the literature all ensure that if an item is preserved by a rule, so
  are its connections with the context graph where it is embedded. But
  there are applications in which it is desirable to specify different
  embeddings. For example when cloning an item, there may be a need to
  handle the original and the copy in different ways. We propose a
  conservative extension of classical algebraic approaches to graph
  transformation, for the case of monic matches, where rules allow one
  to specify how the embedding of preserved items should be carried
  out.

\end{abstract}

\section{Introduction} 
\label{sec:introduction}


Graphs are used to describe a wide range of situations in a precise
yet intuitive way.  Different kinds
of graphs are used in modelling techniques depending on the
investigated fields, which include computer
science, chemistry, biology, quantum computing, etc.   When system states are represented by graphs, it is
natural to use rules that transform graphs to describe the system evolution.
%
There are 
two main streams in the research on 
graph transformations: (i) the algorithmic approaches, 
which  describe explicitly, with a concrete algorithm, the result of applying a rule to
a graph  
(see e.g.~\cite{EngelfrietR97,Echahed08b}), and (ii) the algebraic approaches which
define abstractly a graph transformation step using 
basic constructs borrowed from category theory. In this paper we will consider  
the latter.


The basic idea of all approaches is the same: states are represented
by graphs and state changes are represented by rules that modify
graphs.  The differences are the kind of graphs that may be used, and
the definitions of when and how rules may be applied. One critical
point when defining graph transformation is that 
one cannot delete or copy part 
of a graph without considering the effect of the
operation on the rest of the graph, because deleted/copied items may be linked to others.  
For example,
rule $\rho1$ in Figure~\ref{fig_rulesAndGraphs}(a) specifies that a node
shall be deleted and rule $\rho2$ that a node shall be
duplicated (\textsf{C} labels the copy). 
What should
be the result of applying these rules to the grey node of graph $G$ in Figure~\ref{fig_rulesAndGraphs}(b)?
Different approaches give different answers to this
question.
\begin{figure}[htbp]
\vspace{-0.5cm}
\begin{center}
\includegraphics[width=0.9\textwidth]{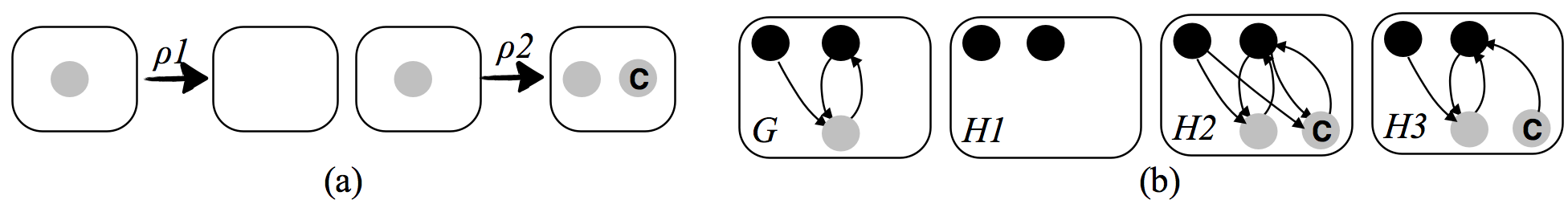} \vspace{-0.5cm}
\end{center}
\caption{(a) Delete/Copy Rules (b) Resulting Graphs }
\label{fig_rulesAndGraphs}
\end{figure}
\vspace{-0.5cm}

The most popular algebraic approaches are the double-pushout (DPO) and the single-pushout 
(SPO), which can be
illustrated as follows:
\vspace{-0.3cm}
$$ \begin{array}{ccc}
  \xymatrix@C=4pc@R=1.5pc{
 \ar@{}[rd]|{PO}  L \ar[d]_{m} & \ar@{}[rd]|{PO}K \ar[l]|{\bx{l}} \ar[d]^{d}  \ar[r]|{\bx{r}} 
  & R \ar[d]^{m'} \\ 
  G 
  & D \ar[l]|{\bx{l'}} \ar[r]|{\bx{r'}} 
  & H 
  \\ 
  } &  \hspace{5mm} & 
  \xymatrix@C=6pc@R=1.5pc{
  \ar@{}[rd]|{PO} L \ar[d]_{m} \ar[r]|{\bx{\psi}}& R  \ar[d]^{m'}  \\ 
  G  \ar[r]|{\bx{\psi'}} & H 
  \\ 
  } \\
  \mbox{Double pushout rewrite step} && 
  \mbox{Single pushout rewrite step} \\
\end{array} $$

\noindent In the DPO approach \cite{EhrigPS73,CorradiniMREHL97}, a rule is
defined as a span $\rho = L\leftarrow K \rightarrow R$ and a match
is a morphism $m : L \rightarrow G$.  A graph $G$ rewrites into a
graph $H$ using rule $\rho$ and match $m$ if
the diagram above to the left can be constructed, where both squares are pushouts.
%
Conditions for the existence and uniqueness of graph $D$ need to be 
studied explicitly, since it is not  a universal construction.
With DPO rules  it is easy to specify  the addition, deletion, merging or 
cloning of items, but their applicability is limited.  For example, rule $\rho1$
of Figure \ref{fig_rulesAndGraphs} is not applicable to the grey node of $G$ (as it would leave
dangling edges), and a rule like $\rho2$ is usually forbidden as the \emph{pushout  complement}
$D$ would not be unique.  

In the SPO approach \cite{Lowe93,EhrigHKLRWC97}, a rule is a
\emph{partial} graph morphism $\psi: L \to R$ and a match is a
total morphism $m : L \rightarrow G$.  A graph $G$ rewrites into a
graph $H$ using rule $\psi$ and match $m$ if 
a square like the one above to the right can be constructed, which 
is a pushout in the category of graphs and partial morphisms. 
Deleting, adding and merging items can
easily be specified with SPO rules, and the approach is
appropriate for specifying deletion of nodes in unknown context, thanks to partial
morphisms. The deletion of a node causes  the deletion of all
edges connected to it, and thus applying 
rule $\rho1$ to  $G$ would result in graph $H1$ in Figure
\ref{fig_rulesAndGraphs}(b). However, since a rule is defined as a single
graph morphism, copying of items (as in rule $\rho2$) cannot be specified directly in SPO.



A more recent algebraic approach is the sesqui-pushout
approach (SqPO)~\cite{CorradiniHHK06}. Rules are
spans like in the DPO, but in the left square of a rewriting step, graph $D$ 
is built as a \emph{final pullback complement}. 
This characterises $D$ with a universal property, enabling to
apply rule $\rho1$, obtaining the same result as in the SPO approach
($H1$), as well as rule $\rho2$, obtaining $H2$ as  result. 
Also $\rho2$ has a   side effect: 
when a node is copied all the edges of the
original node are copied as well. 
Rules do not specify explicitly which context edges are deleted/copied, this is  determined
by the  categorical constructions that define rule application.  
In general, in all algebraic approaches, the items that are
preserved by a rule will retain the connections they have with items which are not in the image of the match. 
This holds also for items that are copied in the SqPO approach.

However, 
there are situations in
which the designer should be able to specify which of the
edges connecting the original node should be copied when a node is copied, 
depending for example on the direction of the edges (incoming or outgoing), or on 
their labels, if any. 
For example, if the graphs of Figure~\ref{fig_rulesAndGraphs} represent web 
pages (nodes) and hyperlinks among them (edges) it would be reasonable to expect that the 
result of copying the grey page of $G$ with rule $\rho2$ would be graph $H3$ rather than $H2$, so that new 
hyperlinks are created only in the new page, and not in the pages pointing to the original one.
%
As another example,   the {\tt fork} and {\tt clone} system commands in Linux both
generate a clone of a process, but with different semantics. Both commands
precisely differ in the way the environment of the cloned process is
dealt with: see \cite{MMitch01} for more details. 

These examples motivate the rewriting approach that we introduce in this paper.
In order to give the designer the possibility of controlling how the nodes that
are preserved or cloned by a rule  
are embedded in the context graph, we propose a new algebraic approach
to graph transformation where rules are triples of arrows with the
same source $(K \uprto{l} L, K \uprto{r} R, K \upmonoto{t} T_K)$.
Arrows $l$ and $r$ are the usual left- and right-hand sides,
while $t$ is a mono called the \emph{\embspec}:
it will play a role in
controlling which edges  from the context are copied. 
The resulting rewriting approach, called
\pbcpo{} (for Algebraic Graph Rewriting with controllEd Embedding) is presented in Sect.~\ref{sec:pbcpo}. 
As usual for the algebraic approaches, \agree{} rewriting will be introduced abstractly 
for a category satisfying suitable requirements, that will be introduced in Sect.~\ref{sec:preliminaries}. 
For the knowledgeable reader we anticipate that we will require the existence of \emph{partial map classifiers}~\cite{CL2}. After discussing an example of social networks in Sect.~\ref{sec:examples}, in Sect.~\ref{sec:sqpo} 
we show that \agree{} rewriting can simulate both SqPO rewriting (restricted to mono matches) and \emph{rewriting with polarised cloning}~\cite{DuvalEP12long}.
Finally some related and future
works are briefly discussed in Sect.~\ref{sec:discussion}. 
Two appendices collect the proofs of the main results, that were omitted in the published 
version~\cite{CorradiniDEPR-ICGT15}  of the present paper.

\section{Preliminaries}
\label{sec:preliminaries}

We start recalling some definitions and a few properties concerning pullbacks, partial maps and partial map classifiers: 
a survey on them can be found in  \cite{CL1,CL2}. 
Let $\catC$ be a category with all pullbacks.
We  recall the following properties:
\vspace{-0.3cm}
\begin{itemize}
\item monos are stable under pullbacks, i.e.~if  $B' \uplto{f'} A' \uprto{m'} A$ is the pullback of $B' \upmonoto{m} B \uplto{f} A$ and $m$ is mono, then $m'$ is mono as well.

\item the \emph{composition} property of pullbacks:
in a commutative diagram as  below on the left, 
if squares (a) and (b) are pullbacks,
so is the composed square;

$$\xymatrix@C=3pc@R=1pc{
\ar@{}[rd]|{PB~(a)} \bullet \ar[d] \ar[r] \ar@/^3ex/[rr]_{=} & 
  \ar@{}[rd]|{PB~(b)} \bullet \ar[d] \ar[r] & \bullet \ar[d] \\ 
\bullet \ar[r] \ar@/_3ex/[rr]^{=} & 
\bullet \ar[r] & \bullet \\ 
}
\qquad\qquad
\xymatrix@C=3pc@R=1pc{
\ar@{}[rd]|{PB~(c)} \bullet \ar[d] \ar@/^3ex/[rr]_{=} \ar@{-->}[r] & 
  \bullet \ar[d] \ar[r] \ar@{}[rd]|{PB~(d)} & \bullet \ar[d] \\  
\bullet \ar[r] & \bullet \ar[r] & \bullet \\ 
}
$$
\item and the \emph{decomposition} property:
in a commutative diagram as the one made of solid arrows above on the right, 
if square (d) and the outer square are pullbacks, then 
there is a unique arrow (the dotted one)  
such that the top triangle  commutes and square (c) is a pullback. 
\end{itemize}
%
 

\noindent A \emph{stable system of monos} of {\catC} is a family \M{} of monos including all isomorphisms, closed under composition, and (\emph{stability}) such that if $(f',m')$ is a pullback of $(m,f)$ and $m \in \M$, then $m' \in \M$. 
An \emph{\M-partial map} over  $\catC$, denoted $(m,f):Z\parto Y$, 
is a span made of a mono $m:X\monoto Z$ in \M{} and an arrow $f:X\to Y$ in \catC{}, 
up to the equivalence relation $(m',f')\sim(m,f)$ 
whenever there is an isomorphism $h$ with $m'\circ h = m$ 
and $f'\circ h = f$.

\noindent
\begin{minipage}{.6\textwidth}
Category \catC{} has an \emph{\M-partial map classifier} $(T,\eta)$ if $T$ is a functor $T: \catC{} \to \catC{}$ and $\eta$ is a natural transformation $\eta: Id_{\catC} \dotarrow T$, such that for each object $Y$ of \catC{}, the following holds:  for each 
\M-partial map $(m,f):Z\parto Y$ there is 
a unique arrow $\varphi(m,f):Z\to T(Y)$ such that   square~(\ref{pb:pmc}) is a pullback.

In this case it can be shown (see~\cite{CL2}) that $\eta_Y \in \M{}$
for each object $Y \in \catC$, that $T$ preserves pullbacks, and that the natural transformation $\eta$ is  \emph{cartesian},
which means that for each $f:X\to Y$ the naturality  
square~(\ref{pb:eta}) is a pullback.
For each mono $m:X\monoto Z$ in \M{}
we will use the notation $\olm=\varphi(m,\id_X)$,
thus $\olm$ is defined by the pullback square~(\ref{pb:olm}). 
\end{minipage}
\begin{minipage}{.4\textwidth}
\vspace{-4mm}
\begin{equation}
\label{pb:pmc} 
\xymatrix@C=5pc{
\ar@{}[rd]|{PB} X \ar@{ >->}[d]|{\bx{m}} \ar[r]|{\bx{f}} & Y \ar@{ >->}[d]|{\bx{\eta_Y}} \\ 
Z \ar[r]|{\bx{\varphi(m,f)}} & T(Y) 
} 
\end{equation}
\begin{equation}
\label{pb:eta} 
\xymatrix@C=4pc{
\ar@{}[rd]|{PB} X \ar@{ >->}[d]|{\bx{\eta_X}} \ar[r]|{\bx{f}} & 
  Y \ar@{ >->}[d]|{\bx{\eta_Y}} \\ 
T(X) \ar[r]|{\bx{T(f)}} & T(Y) \\ 
} 
\end{equation}
\begin{equation}
\label{pb:olm} 
\xymatrix@C=5pc{
\ar@{}[rd]|{PB} X \ar@{ >->}[d]|{\bx{m}} \ar[r]|{\bx{\id_X}} & X \ar@{ >->}[d]|{\bx{\eta_X}} \\ 
Z \ar[r]|{\bx{\olm}} & T(X) \\ 
} 
\end{equation}
\end{minipage}

\hide{
If the inclusion functor $I$ has a right adjoint $E$, 
we will denote by  $(T,\eta,\mu)$ the monad associated with the adjunction 
$I\dashv E$, called \emph{the partial map classifier} of $\catC$.
The endofunctor $T$ on $\catC$ is defined as  $T=E\circ I$, 
which means that $T(X)=E(X)$ for each object $X$ and 
$T(f)=E(\id_X,f)$ for each map $f:X\to Y$.
In addition, it can be shown that each component of the unit $\eta$ is mono, 
and that 
the natural transformation $\eta$ is \emph{cartesian},
which means that for each $f:X\to Y$ the naturality commutative 
square~(\ref{pb:eta}) is a pullback~\cite{CL2}.
}
%


\comment{
\begin{lemma} 
\label{lemma:mf}
For each \M-partial map $(m,f):Z\parto Y$, with $m:X\monoto Z$, 
we have $T(f)\circ \olm = \varphi(m,f) $.
If in addition $(m,f)$ is the pullback of some $(j,g)$ 
with $j:Y\monoto W$ in \M{}, then $T(f)\circ \olm = \varphi(m,f) = \olj \circ g$.
\end{lemma}

\begin{proof}
For the first point, the left diagram below is composed of two pullbacks of shape~(\ref{pb:olm}) 
and~(\ref{pb:eta}), respectively, therefore it is a pullback. Since it has shape~(\ref{pb:pmc}), we conclude
that  $ T(f)\circ \olm  = \varphi(m,f)$.

\centerline{$\xymatrix@C=3pc{
\ar@{}[rd]|{PB~(\ref{pb:olm})} X \ar@{ >->}[d]|{\bx{m}} \ar[r]|{\bx{\id_X}} & 
  \ar@{}[rd]|{PB~(\ref{pb:eta})} X \ar@{ >->}[d]|{\bx{\eta_X}} \ar[r]|{\bx{f}} & 
  Y \ar@{ >->}[d]|{\bx{\eta_Y}} \\ 
Z \ar[r]|{\bx{\olm}}  \ar@{-->}@/_3ex/[rr]|{\bx{\varphi(m,f)}}   & 
  T(X) \ar[r]|{\bx{T(f)}} & 
  T(Y) } \qquad
\xymatrix@C=3pc{
\ar@{}[rd]|{PB} X \ar@{ >->}[d]|{\bx{m}} \ar[r]|{\bx{f}} & 
  \ar@{}[rd]|{PB~(\ref{pb:olm})} Y \ar@{ >->}[d]|{\bx{j}} \ar[r]|{\bx{\id_X}} & 
  Y \ar@{ >->}[d]|{\bx{\eta_Y}} \\ 
Z \ar[r]|{\bx{g}} \ar@{-->}@/_3ex/[rr]|{\bx{\varphi(m,f)}} & 
  W \ar[r]|{\bx{\olj}} & 
  T(Y) } 
$}

For the second point, similarly, the right diagram above is the composition 
of a pullback of shape~(\ref{pb:olm}) and of the left square that is pullback by assumption, 
thus it is a pullback. Since  it has shape~(\ref{pb:pmc}), we can  conclude
that  $ \olj \circ g =  \varphi(m,f)$.

%


\end{proof} 
}

Before discussing some examples of categories that have \M{}-partial map classifiers, let us recall the definition of some categories of graphs.

\begin{definition}[graphs, typed graphs]
\label{de:graphs}
 The category of \emph{graphs} $\Gr$ is defined as follows.  
A \emph{graph} $X$ is made of a set of \emph{nodes} $N_X$, a set
of \emph{edges} $E_X$ and two functions $s_X,t_X:E_X\to N_X$, called \emph{source} and
\emph{target}, respectively. As usual, we write $n\uprto{e}p$ when $e \in E_X$, $n = s_X(e)$ and $p = t_X(e)$.
A \emph{morphism} of graphs $f:X\to Y$ is made of two functions 
$f:N_X\to N_Y$ and $f:E_X\to E_Y$, such that
$f(n)\uprto{f(e)}f(p)$ in $Y$ for each edge $n\uprto{e}p$ in $X$. 

 Given a fixed graph $\mathit{Type}$, called \emph{type graph}, the category of \emph{graphs typed over} $\mathit{Type}$ is the slice category $\Gr \slice \mathit{Type}$.
 \end{definition}

\begin{definition}[polarized graphs~\cite{DuvalEP12}]
\label{def:pol-cat}
A \emph{polarized graph} $\grpol{X}=(X,N_X^+,N_X^-)$ is a graph $X$ 
with a pair $(N^+,N^-)$ of subsets of the set of nodes $N_X$
such that for each edge $n\uprto{e}p$ 
one has $n\in N_X^+$ and $p\in N_X^-$.  
A \emph{morphism} of polarized graphs $f: \grpol{X} \to \grpol{Y}$, 
where $\grpol{X}=(X,N_X^+,N_X^-)$ and $\grpol{Y}=(Y,N_Y^+,N_Y^-)$,
is a morphism of graphs $f:X\to Y$ such that
$f(N_X^+)\subseteq N_Y^+$ and $f(N_X^-)\subseteq N_Y^-$.
This defines the \emph{category} $\Grpol$ of polarized graphs.

A morphism of polarized graphs $f: \grpol{X} \to \grpol{Y}$ is \emph{strict},
or \emph{strictly preserves the polarization}, 
if $f(N_X^+)= f(N_X)\cap N_Y^+$ and 
$f(N_X^-)= f(N_X)\cap N_Y^-$.
\end{definition}

\vspace{-0.5cm}

\subsection{Examples of Partial Map Classifiers}
\label{sect:examples-classifiers}
Informally, if $(m,f):Z\parto Y$ is a partial map, a total arrow $\varphi(m,f):Z \to T(Y)$
representing it should agree with $(m,f)$ on the ``items'' of $Z$ on which it is defined,
and should map any item of $Z$ on which $(m,f)$ is not defined in 
a unique possible way to some item of $T(Y)$ which does not belong to (the image via $\eta_Y$ of) $Y$.
For example, in \Set{} the partial map classifier $(T,\eta)$ 
is defined as
 $T(X)=X+\{\ourbot\}$ and $T(f)=f+\id_{\{\ourbot\}}$ for functor $T$, while the 
 natural transformation $\eta$ 
 is made of the inclusions $\eta_X:X\to X+\{\ourbot\}$.
For each partial function $(m,f):Z\parto Y$, 
function $\varphi(m,f):Z\to Y+\{\ourbot\}$
extends $f$ by mapping $x$ to $f(x')$ when $x=m(x')$ 
and $x$ to $\ourbot$ when $x$ is not in the image of $m$.



 \vspace{-0.3cm}
  \begin{figure}[htbp]
\begin{center}
\includegraphics[width=\textwidth]{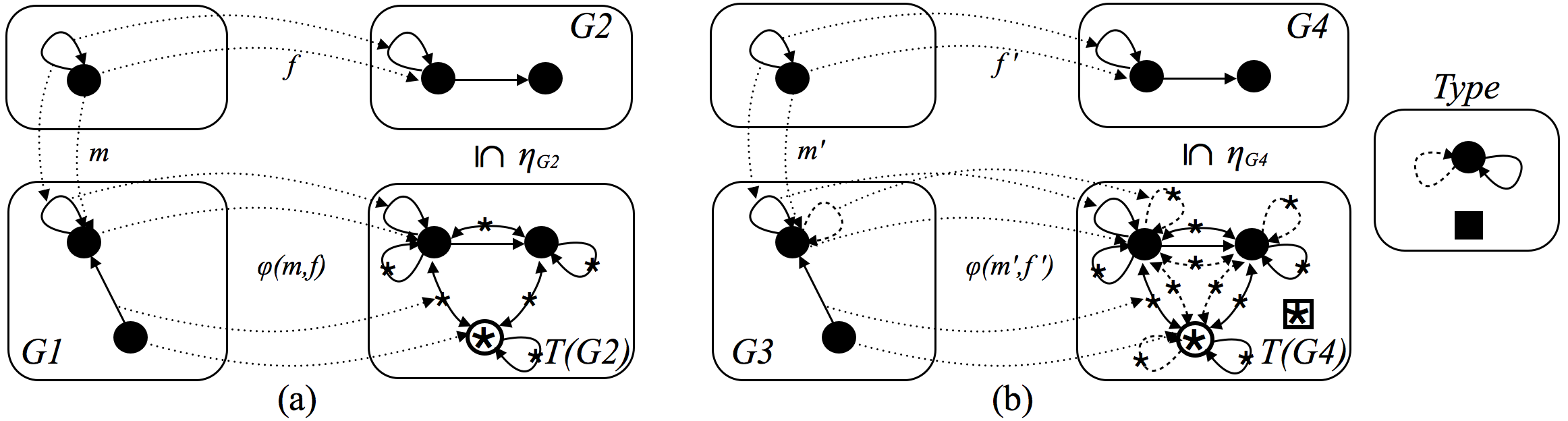}\vspace{-0.5cm}
\end{center}
\caption{Partial Map Classifiers (a) in $\Gr$ (b) in $\Gr\downarrow Type$}
\label{fig_PartialMapClassExamples}
\end{figure}
\vspace{-0.3cm}

In  $\Gr$  the partial map classifier $(T, \eta)$
is such that 
$\eta_G: G \to T(G)$
embeds $G$ into the graph $T(G)$ 
made of the disjoint union of $G$ with a node $\ourbot$ 
and with an edge $\ourbot_{n,p}: n\to p$ for each pair 
of vertices $(n,p)$ in $(N_G+\{\ourbot\}) \times (N_G+\{\ourbot\})$.  
The total morphism $\varphi(m,f)$ 
is defined on the set of nodes exactly as in {\Set}, and on each edge similarly, but consistently with the way its source and target nodes are mapped.
Figure \ref{fig_PartialMapClassExamples}(a) shows an example of a partial map $(m,f): \mathit{G1}\to  \mathit{G2}$ and the corresponding extension to the total morphism $\varphi (m,f): \mathit{G1}\to \mbox{\emph{T(G2)}}$. In the graphical notation we use edges with double tips to denote two edges, one in each direction; arrows and node marked with $*$ are added to $ \mathit{G2}$ by the $T$ construction.


$\Set$ and $\Gr$ are instances of the general result that all elementary toposes have \M{}-partial map classifier, for \M{} the family of all monos. These include, among others, all \emph{presheaf categories} (i.e., functor
categories like $\Set^{\catC^{\mathrm{op}}}$, where $\catC$ is a small category), and the slice
categories like $\catC \slice X$ where $\catC$ is a topos and $X$ an object of $\catC$.
In fact $\Gr$ is the presheaf category $\Set^{\catC^{\mathrm{op}}}$ where $\catC^{\mathrm{op}}$ has two objects $E$, $N$ and two non-identity arrows $s, t: E \rto N$.

As a consequence also the category of typed graphs  $\Gr\downarrow \mathit{Type}$ has partial maps classifiers for all monos. Figure \ref{fig_PartialMapClassExamples}(b) shows an example: the partial map classifier of a graph $G4$ typed over $\mathit{Type}$ is obtained by adding to $G4$ all the nodes of $\mathit{Type}$ and, for each pair of nodes of the resulting graph, one instance of each edge that is compatible with the type graph.


The category of \emph{polarized graphs} of Def.~\ref{def:pol-cat}   (that will be used later in Sect.~\ref{subsec:pbcpovspolclo}), is an example of category which has \M{}-partial map classifiers for a family \M{} which is a proper subset of all monos.
It is easy to check that strict monos form a stable system of monos (denoted \Str{}) for category $\Grpol$, and that 
$\Grpol$ has an \Str{}-partial map classifier $(\TT, \eta)$.  Morphism $\eta_{\grpol{K}}$ embeds a polarised graph $\grpol{K}$ into $\TT(\grpol{K})$, which is the disjoint union of
$\grpol{K}$ with a node $\ourbot$ (having polarity $\pm$) and with an edge $\ourbot_{n,p} :
n\to p$ for each pair of nodes 
$(n,p)\in(N_K^+ +\{\ourbot\})\times (N_K^- +\{\ourbot\})$. 
The total morphism $\varphi(m,f)$ is defined exactly as in the category of graphs.

\section{Algebraic Graph Rewriting with Controlled Embedding}
\label{sec:pbcpo}

In this section we introduce the {\pbcpo} approach to rewriting,
defining rules, matches and rewrite steps.  The main difference with
respect to the DPO and SqPO approaches is that a rule has an
additional component $t: K \monoto T_K$, called the {\emph {\embspec}}, that
enriches the interface and can be used to control the embedding of
preserved items.
We assume that  $\catC$ is a category with all pullbacks, 
with a stable system of monos \M{}, with an \M{}-partial map classifier $(T,\eta)$, and with pushouts along monos in \M{}. 


\begin{definition}[{\pbcpo} rules and matches]
\label{def:pbcpo}

\noindent
\begin{minipage}{.7\textwidth}
  {\ --\ }A \emph{rule} is a triple of arrows with the same source 
  $\rho = (K \uprto{l} L, K \uprto{r} R, K \upmonoto{t} T_K)$, with $t$ in \M{}. Arrows  $l$ and $r$ are the \emph{left-} and \emph{right-hand side}, respectively, and $t$ is called the  \emph{\embspec{}}. 
\end{minipage}
\begin{minipage}{.3\textwidth}
\mbox{}
\vspace{-1cm}
$$ \xymatrix@C=2pc@R=1.5pc{
L & K \ar[l]_{l} \ar[r]^{r} \ar@{ >->}[d]^{t} & R \\  
& T_K & }$$ 
\end{minipage}
\noindent
{\ --\ } A \emph{match} of a rule $\rho$ with left-hand-side $K \uprto{l} L$ 
  is a mono $L \upmonoto{m} G $ in \M{}. 
\end{definition}
 
\vspace{-0.3cm}
\begin{equation}
 \label{eq:agree-rew}
 \xymatrix@C=6pc{
\ar@{}[dr]|{PB~(remark)} L \ar@{ >->}[d]^{m} 
  \ar@<-.5ex>@{ >->}@/_3ex/[dd]_(.6){\eta_L}^(.6){=} & 
  \ar@{}[dr]|{PO~(b)} K \ar[l]_{l} \ar[r]^{r} \ar@{ >->}[d]_{n} 
    \ar@<.5ex>@{ >->}@/^3ex/[dd]^(.6){t}_(.6){=}|{\hole} & 
    R \ar[d]^{p} \\
\ar@{}[dr]|{PB~(a)} G \ar[d]^{\olm} & 
  D \ar[l]_{g} \ar[r]^{h} \ar[d]_{n'} & 
    H \\ 
T(L) & 
  T_K \ar[l]^{l' = \varphi(t,l)} & 
  }
\end{equation}
\vspace{-0.3cm}

 \begin{definition}[\agree{} rewriting]
 \label{def:pbcpo-rewriting}
 Given a rule $\rho =  (K \uprto{l} L, K \uprto{r} R, K \upmonoto{t} T_K)$ and a match $L \upmonoto{m} G$, an {\agree} \emph{rewrite step} $G \Rightarrow_{\rho,m} H$ is constructed in two phases as follows (see diagram~(\ref{eq:agree-rew})):
\\ \noindent
{\ (a) \ } Let  $l'=\varphi(t,l): T_K \to T(L)$ and $\olm=\varphi(m,\id_L):G \to T(L)$,
then $G \uplto{g} D \uprto{n'} T_K$ is the pullback of $G \uprto{\olm} T(L) \uplto{l'} T_K$. 
\\ \noindent
{\ (remark) \ } In diagram~(\ref{eq:agree-rew}) $(g,n')$ is a pullback of $(\olm,l')$ 
and $(l,t)$ is a pullback of $(\eta_L,l')$ because $l'=\varphi(t,l)$,
thus by the decomposition property there is a unique $n:K\to D$
such that $n'\circ n=t$, $g\circ n = m\circ l$ and 
$(l,n)$ is a pullback of $(m,g)$. 
Therefore $n$ is a mono in $\M{}$ by stability. 
\\ \noindent
{\ (b) \ } Let $n$ be as in the previous remark.
Then $R \uprto{p} H \uplto{h} D$ is the pushout of $D \uplto{n} K \uprto{r} R$.
 \end{definition}
 
 \begin{example}
Using the AGREE approach, the web page copy operation can be modelled using
 the rule $(K1 \to L1, K1 \to R1, K1 \monoto TK1)$
shown in Figure \ref{fig_webpagecopy}. This rule
is typed over the type graph $\mathit{Type}$. Nodes denote web pages, solid edges denote links and dashed edges describe the subpage relation. 
The different node colours (gray and black) are used just to define the match, whereas
 the {\bf{\textsf{c}}} inside some nodes is used to indicate that this is a
 copy. When this rule is applied to graph ${G1}$, only out-links are
 copied because the pages that link the copied one remain the
 same, that is, they only have a link to the original page, not to
 its copy. The subpage structure is not copied. 
 Note that all black nodes of ${G1}$ and ${D1}$ are mapped to $*$-nodes of $T(L1)$ and ${\mathit{TK}1}$, respectively.
 \vspace{-0.5cm}
  \begin{figure}[htbp]
\begin{center}
\includegraphics[width=0.8\textwidth]{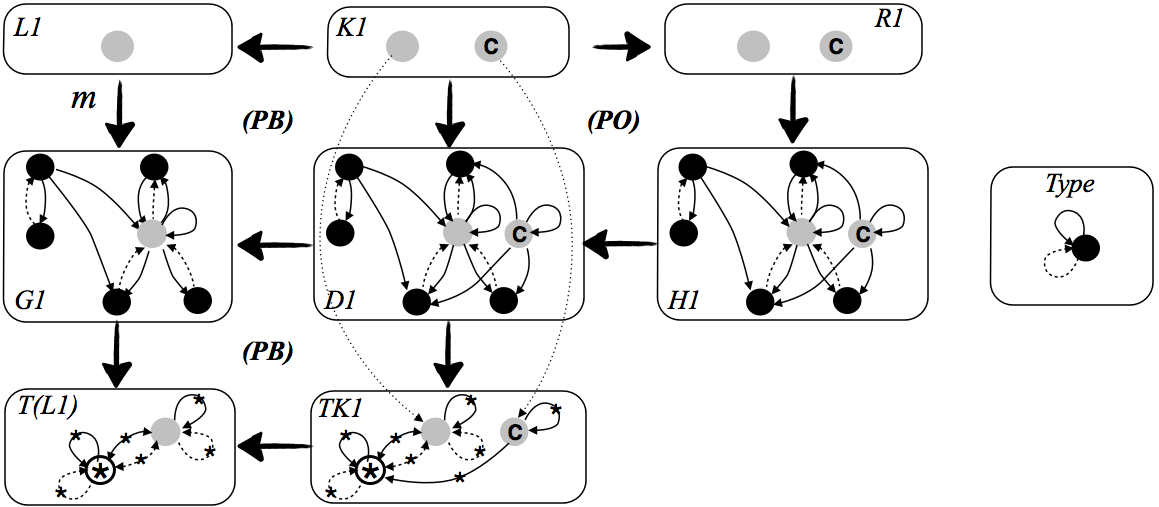}\vspace{-0.5cm}
\end{center}
\caption{Rule for copying a web page and example of application}
\label{fig_webpagecopy}
\end{figure}
\vspace{-0.7cm}
\end{example}


In the general case just presented, the \embspec{}  $t$ could have a non-local effect on the rewritten object.
In the following example, based on category {\bf Set}, the rule simply preserves a single element and $t: K \to T_K$ is the identity. 
If applied to set $G$, its effect is to delete all the elements not matched by $m$, as shown. 
We say that this rewrite step is \emph{non-local}, because it modifies the complement of the image of $L$ in $G$.

\begin{center}
\includegraphics[width=0.6\textwidth]{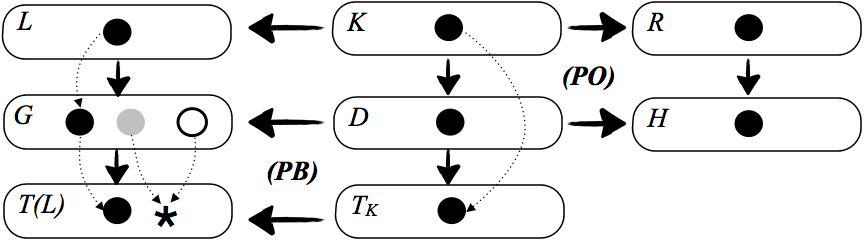}
\end{center}


 
In the rest of this section we present a condition on rules that 
ensures the locality of the rewrite steps. In order to formulate this condition in the general setting 
of a  category with \M{}-partial map classifiers, we need to  consider a generalisation of the notion 
of complement of a subset in a set, that we call  \emph{strict complement}. 
For instance, in  category $\Gr$, the strict complement of a subgraph $L$
in a graph $G$ is the largest subgraph $G\setminus L$ of $G$ disjoint from $L$;
thus, the union of $L$ and $G\setminus L$ is in general smaller than $G$.
%
Intuitively, we will say that an \agree{} rewrite step as in diagram~(\ref{eq:agree-rew}) is \emph{local}
if the strict complement of $L$ in $G$ is preserved, i.e., if $g$ restricts to an isomorphism 
between $D\setminus K$ and $G\setminus L$.  

For the definitions and results that follow, we assume that category $\catC$, besides satisfying the conditions listed at the beginning of this section, has a final object~$\fin$ and a \emph{strict} initial object $\ini$ (i.e., each arrow with target $\ini$ must have~$\ini$ as source); furthermore, the unique arrow from $\ini$ to $\fin$, that we denote $\inifin: \ini \to \fin$, belongs to \M{}.
For each object $X$ of $\catC$ we will denote by $\final_X:X\to \fin$ the unique arrow to the final object,  and by $\init_X:\ini\to X$  the unique arrow from the 
initial object. 

\hide{Since $\fin$ is a terminal object,  
each arrow with source $\fin$ is a mono.} 
\hide{Let us start by observing that since $\ini$ is a \emph{strict} initial object, each arrow with source $\ini$ is a mono,
including the only arrow from $\ini$ to $\fin$ that we denote $\inifin:\ini\to\fin$. }
For each mono $m:L\monoto G$ in \M{} the \emph{characteristic arrow} of $m$
is defined as $\chi_m = \varphi(m,\final_L):G\to T(\fin)$,
(see pullback (a) in diagram (\ref{pb:sub-class-new})). 
Object $T(\fin)$ is called the \emph{\M{}-subobject classifier}. 

\begin{equation}
\label{pb:sub-class-new} 
\xymatrix@C=6pc{
\ar@{}[rd]|{PB~(c)} K \ar@{ >->}[d]|{\bx{n}} \ar[r]|{\bx{l}} & 
\ar@{}[rd]|{PB~(a)} L \ar@{ >->}[d]|{\bx{m}} \ar[r]|{\bx{\final_L}}  & 
\final \ar@{ >->}[d]^{\eta_{\final}}_{\ttrue} \\ 
\ar@{}[rd]|{PB~(d)}  D \ar[r]|{\bx{g}} &
\ar@{}[rd]|{PB~(b)}  G \ar[r]|{\bx{\chi_m = \varphi(m,\final_L)}} & T(\final) \\ 
D\setminus K \ar@{->}[u]|{\bx{D\setminus n}} \ar[r]|(.5){\bx{g\setminus l}} &
G\setminus L \ar@{->}[u]|{\bx{G\setminus m}} \ar[r]|(.6){\bx{\fin_{G\setminus L}}} &
\fin \ar@{->}[u]^{{\ffalse}}_{T(\inifin)\circ  \olinifin}
} 
\end{equation}

By exploiting the assumption that $! \in \M{}$ and that $\ini$ is
strict initial, 
it can be shown that 
$T(\ini)$ is isomorphic to $\fin$,  with $\olinifin = \final_{T(\ini)}^{-1}$, and this yields an arrow 
$T(\inifin) \circ \olinifin: \fin \to T(\fin)$.
In category \Set{}
(with \M{} the family of all injective functions) arrows $\eta_1$ and $T(\inifin) \circ \olinifin: 1 \to T(1)$ are the coproduct injections of the subobject classifier (which is a two element set), and are also known as $\ttrue$
and $\ffalse$, respectively.   In \Set{}  the complement of an injective function $m : L \monoto G$ can be defined as the pullback of $\chi_m: G \to T(1)$ along $\ffalse$.  We generalise this to the present setting as follows.

\begin{definition}[strict complements]
\label{def:complement} 
Let  $\catC$ be a category that satisfies the conditions listed at the beginning of Section~\ref{sec:pbcpo}, has final object $1$, strict initial object $0$, and such that 
$\inifin \in \M{}$. Let  $m:L\monoto G$ be a mono in \M{}, and $\chi_m: G \to T(\final)$ be its characteristic arrow defined by pullback (a) of diagram~(\ref{pb:sub-class-new}). Then the \emph{strict complement of $L$ in $G$ (with respect to $m$)} is the arrow $G\setminus m: G\setminus L\monoto G$ obtained 
as the pullback of $\chi_m$ and $\ffalse{}  =T(\inifin)\circ \olinifin: 1 \to T(\final)$, as in square (b) of diagram~(\ref{pb:sub-class-new}).

Furthermore, for each  pair of monos $n:K\monoto D$ and $m:L\monoto G$ in \M{} and for each pair of arrows $l: K \to L$ and $g: D \to G$ such that square~(c) of diagram~(\ref{pb:sub-class-new}) is a pullback,  
arrow $g\setminus l: D\setminus K \rto G\setminus L$ as in square~(d) is called the \emph{strict complement} of $l$ in $g$ (with respect to $n$ and $m$). 
\end{definition}

It is easy to check that arrow $g\setminus l$ exists and is uniquely determined by the fact that square~(b) is a pullback; furthermore square~(d) is a pullback as well, by decomposition. We will now exploit the notion of strict complement to formalize locality of \agree{} rewriting.

\begin{definition}[local rules and local rewriting in \agree{}] 
\label{def:embSpec} 
An {\agree} rule $\rho = (l,r,t)$ is \emph{local} if $\ol{t}:T_K\to T(K)$ 
is such that $\ol{t}\setminus \id_K:T_K\setminus K \to T(K)\setminus K$ is an iso. 
An \agree{} rewrite step as in diagram~(\ref{eq:agree-rew}) is \emph{local} if arrow 
$g\setminus l: D\setminus K \rto G\setminus L$ is an iso.
\end{definition}

The definition of local rewrite steps is as expected, but that of local rules deserves some comments. 
Essentially, in the first phase of \agree{} rewriting, when building the pullback~(a) of diagram~(\ref{eq:agree-rew}),  the shape of $T_K \setminus K$  determines the effect of the rule on the strict complement of $L$ in $G$, which is mapped by $\olm$ to $T(L) \setminus L$.  It can be proved that $T(L) \setminus L$ is isomorphic to 
$T(K) \setminus K$, therefore if the rule is local we have that  $T_K \setminus K$ is isomorphic to $T(L) \setminus L$, and this guarantees that the strict complement of $L$ in $G$ is preserved in the rewrite step.
These considerations provide an outline of the proof of the main result of this section, which is reported in Appendix~\ref{sec:AppProofs}.

\begin{proposition}[locality of {\agree} rewrite steps]
\label{prop:locality}
Let $\rho = (l,r,t)$ be a local rule.
Then, with the notations as in diagram~(\ref{eq:agree-rew}), for each match $L \upmonoto{m} G$ the resulting rewrite step $G \To_{\rho,m} H$ is local.
\end{proposition}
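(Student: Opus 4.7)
The plan is to express both strict complements $G \setminus L$ and $D \setminus K$ as pullbacks of $\olm: G \to T(L)$ along canonically isomorphic subobjects of $T(L)$; the resulting iso between the complements will then be identified with $g \setminus l$ by the uniqueness of the pullback square (d) in diagram~(\ref{pb:sub-class-new}).

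Two preliminary facts drive the argument. First, $l' = T(l) \circ \ol{t}$: paste the pullback~(\ref{pb:olm}) defining $\ol{t}$ with the cartesianity pullback~(\ref{pb:eta}) of $\eta$ at $l$ to produce a pullback of shape~(\ref{pb:pmc}) whose classifying arrow must be $\varphi(t,l) = l'$. Second, $T(X) \setminus X \cong \fin$ for every object $X$: by strictness of $\ini$, the pullback of $\final_X: X \to \fin$ along $\inifin: \ini \to \fin$ is $\ini$; applying the pullback-preserving functor $T$, and using that $\olinifin: \fin \to T(\ini)$ is an iso with $T(\inifin) \circ \olinifin = \ffalse$, shows that the pullback of $\chi_{\eta_X} = T(\final_X)$ along $\ffalse$ is $\fin$.

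With these in hand, I unfold the two complements. Since $m$ is the pullback of $\eta_L$ along $\olm$ by the very definition of $\olm$ in~(\ref{pb:pmc}), pullback composition gives $\chi_m = T(\final_L) \circ \olm$, so by pullback decomposition $G \setminus L$ is the pullback of $\olm$ along the inclusion $T(L) \setminus L \monoto T(L)$. Analogously, the remark in Def.~\ref{def:pbcpo-rewriting} identifies $n$ as the pullback of $t$ along $n'$, and $t$ is itself the pullback of $\eta_K$ along $\ol{t}$; hence $\chi_n = T(\final_K) \circ \ol{t} \circ n'$ and $D \setminus K$ is the pullback of $n'$ along $T_K \setminus K \monoto T_K$. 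Pasting this with pullback~(a) of diagram~(\ref{eq:agree-rew}) expresses $D \setminus K$ as the pullback of $\olm$ along $T_K \setminus K \to T_K \xrightarrow{l'} T(L)$, which by the first preliminary fact equals $T_K \setminus K \xrightarrow{\ol{t}\setminus \id_K} T(K)\setminus K \to T(K) \xrightarrow{T(l)} T(L)$. Using $T(\final_L) \circ T(l) = T(\final_K)$, the tail $T(K)\setminus K \to T(L)$ factors through $T(L)\setminus L \monoto T(L)$; locality makes $\ol{t}\setminus \id_K$ an iso, and the second preliminary fact forces the induced map $T(K)\setminus K \to T(L)\setminus L$ to be an iso of terminal objects. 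Hence $D \setminus K$ and $G \setminus L$ are both pullbacks of $\olm$ along the same subobject $T(L) \setminus L \monoto T(L)$, up to a canonical iso of the bases, so $D \setminus K \cong G \setminus L$.

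The final step matches this iso with $g \setminus l$. By construction the composite iso $D \setminus K \xrightarrow{\sim} G \setminus L$ fits into a commutative square together with $g$, $G \setminus m$ and $D \setminus n$; but, by the remark following Def.~\ref{def:complement}, $g \setminus l$ is the unique arrow with that commutativity property, so $g \setminus l$ equals this iso and is itself an iso. The main obstacle will be the bookkeeping needed to verify that the chain of pullback compositions and decompositions matches the specific arrows of diagram~(\ref{eq:agree-rew}), and that the induced iso $T(K)\setminus K \xrightarrow{\sim} T(L)\setminus L$ obtained from the second preliminary fact agrees with the map induced by $T(l)$ between the subobject complements; these are routine applications of pullback uniqueness.
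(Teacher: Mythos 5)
Your proof is correct and is essentially the paper's own argument in a different arrangement: both reduce locality of the step to the locality hypothesis on $\ol{t}$ together with the fact that $T(X)\setminus X\cong T(\ini)$, so that (via $l'=T(l)\circ\ol{t}$) the induced map $T_K\setminus K\to T(L)\setminus L$ on complements is an iso, and then transfer this iso across pullback~(a) of diagram~(\ref{eq:agree-rew}) to conclude that $g\setminus l$ is an iso. Your only variations --- obtaining that $T(K)\setminus K\to T(L)\setminus L$ is iso from both objects being terminal rather than from $T$ preserving the relevant pullback square (the paper's Lemma~\ref{lemma:locality}), and organizing the transfer as a comparison of two pullback presentations of the strict complements over $T(L)$ instead of pulling the whole first-phase diagram back along $\ffalse$ as in Figure~\ref{fig:sm} --- are cosmetic, with the remaining steps (e.g.\ that $n$ is the pullback of $t$ along $n'$, which needs a one-line decomposition argument beyond the remark you cite) being the routine bookkeeping you acknowledge.
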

%

\section{Example: Social Network Anonymization}
\label{sec:examples}

Huge network data sets, like social networks (describing personal
relationships and cultural preferences) or communication networks (the
graph of phone calls or email correspondents) become more and more
common. These data sets are analyzed in many ways varying from the
study of disease transmission to targeted advertising. Selling network
data set to third-parties is a significant part of the business model
of major internet companies. Usually, in order to preserve the
confidentiality of the sold data set, only ``anonymized'' data is
released. The structure of the network is preserved, but personal
identification informations are erased and replaced by random
numbers. This anonymized network may then be subject to further
processing to make sure that it is not possible to identify the nodes
of the network (see \cite{Hay10} for a discussion about
re-identification issues). We are going to show how AGREE rewriting
can be used for such anonymization procedure. Of course, due to space
limitations we cannot deal with a complete example and will focus on the
first task of the anonymization process: the creation of a clone of
the social network in which only non-sensitive links are
copied. We model the following idealized scenario: the administrator of
a social network sells anonymized data sets to third-parties so that
they can be analyzed without compromising confidentiality. Our graphs
are made of four kinds of nodes: customer  (grey nodes),
administrator of the social network (white node), user of the
social network  (black nodes) and square nodes that model the
fact that data will suffer post-processing. Links of the social
network can be either public (black solid) or private (dashed -- this
latter denotes sensitive information that should not be disclosed),
moreover we use another type of edges (grey), denoting the fact that a
node ``knows", or has access to another node. The corresponding type
graph $Type$ is shown in Figure \ref{fig_graphs}. 
 \begin{figure}[htbp]
\begin{center}
\includegraphics[width=0.2\textwidth]{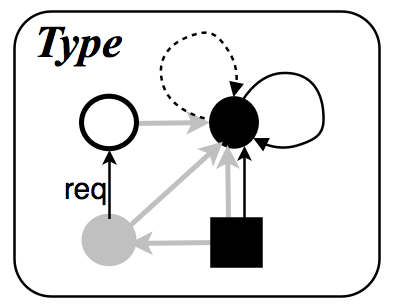}
\includegraphics[width=0.55\textwidth]{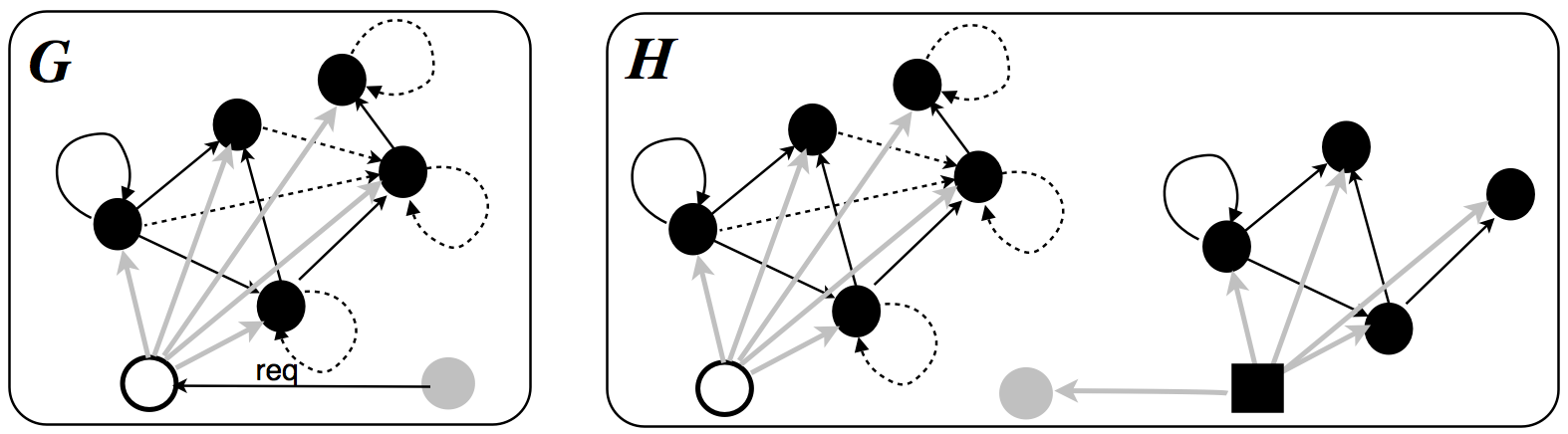}
\vspace{-0.5cm}
\end{center}
\caption{Type Graph $Type$, Graphs $G$ and $H$}
\label{fig_graphs}
\end{figure}

The rule depicted in Figure \ref{fig_anonymizerule} shows an example
that anonymizes a portion of a social network with $4$ nodes
(typically portions of a fixed size are sold).  
Graph $TK$  
consists of a clique of all copies of matched black nodes (denoted by {\bf{\textsf{c}}}) with public links, and
a graph representing the $T$ construction applied to the rest of $K$.
To enhance
readability, we just indicated that the graph inside the dotted square
should be completed according to $T$: a copy
of the nodes of the type graph should be added, together with all
possible edges that are compatible with the type graph.
This allows
the cloning of the subgraph defined by the match limited to public
edges. In the right hand side $R$ a new square node is added marking
the cloned nodes for post-processing.  The application of this rule to
graph $G$ in Figure \ref{fig_graphs} with a match not including the top black node produces graph $H$. 
 \begin{figure}[htbp]
\begin{center}
\includegraphics[width=0.8\textwidth]{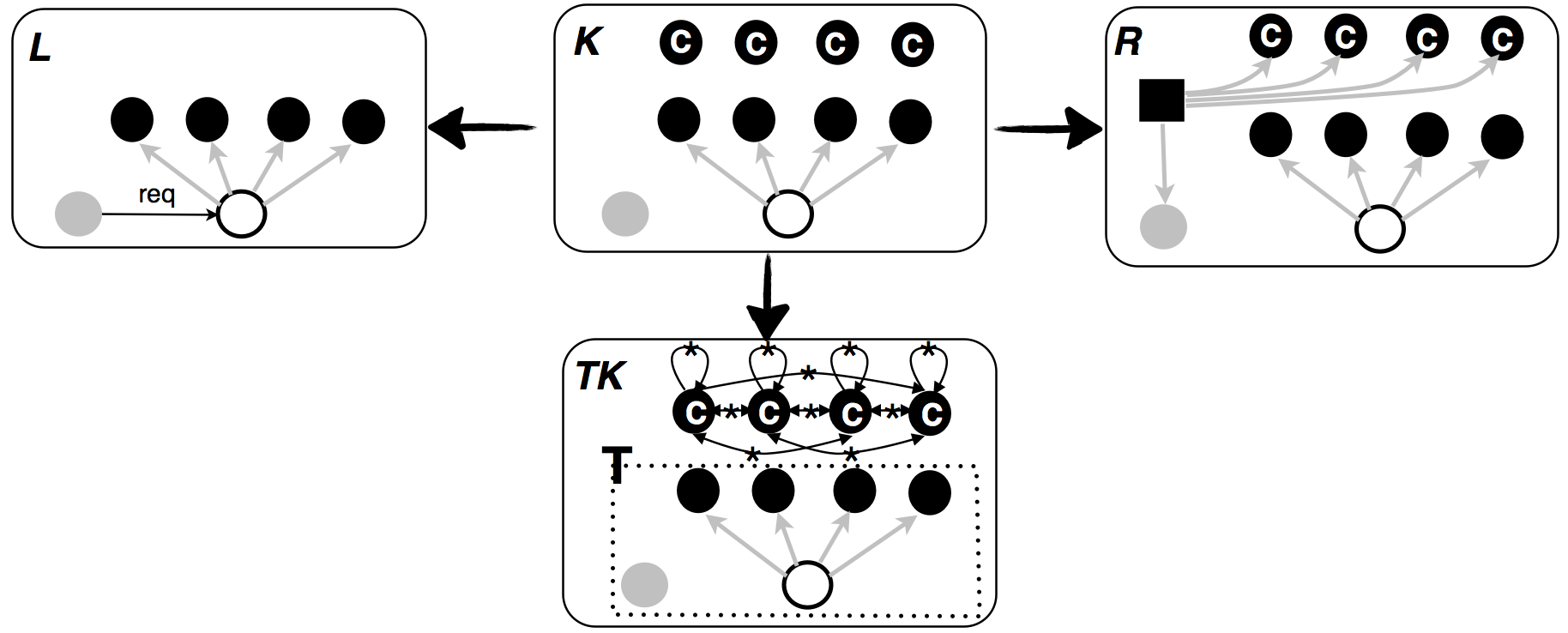}
\vspace{-0.5cm}
\end{center}
\caption{4-Anonymize rule}
\label{fig_anonymizerule}
\end{figure}
\vspace{-1cm}


\section{\pbcpo  Subsumes \sqpo  and Polarized Node Cloning}
\label{sec:sqpo}

As recalled in the Introduction, in the \sqpo{} approach~\cite{CorradiniHHK06}  a rule is a span 
$L \uplto{l} K \uprto{r} R$ and a rewriting step for a match $L \uprto{m} G$ is made of 
a first phase where the \emph{final pullback complement} $D$ is constructed, and next a pushout with the right-hand side is performed.  

\begin{definition}[final pullback complement]
\label{def:fpbc}
In  diagram~(\ref{eq:fpbc}), $K \uprto{n} D \uprto{a} G$ is a \emph{final pullback complement} of 
$K \uprto{l} L \uprto{m} G$ if 

\noindent
\begin{minipage}{.6\textwidth}
\begin{enumerate}
\item the resulting square is a pullback, and 
\item for each pullback 
$G \uplto{m} L \uplto{d} K' \uprto{e} D' \uprto{f} G$ and  arrow $K' \uprto{h} K$ such that 
$l \circ h = d$,  there is a unique arrow  $D' \uprto{g} D$ such that $a \circ g = f$ and $g \circ  e = n \circ h$.
\end{enumerate}
\end{minipage}
\begin{minipage}{.4\textwidth}
\begin{equation}
\label{eq:fpbc}
 \xymatrix@C=1pc@R=0.5pc{
L \ar[dd]|{\bx{m}}   
& & K \ar[ll]|{\bx{l}}     \ar[dd]|{\bx{n}} 
& & K'   \ar[ll]|{\bx{h}}  \ar[dd]|{\bx{e}}  \ar@/_3ex/[llll]|{\bx{d}}   \\
&  \\
G  
& & D   \ar[ll]|{\bx{a}}
& & D'   \ar@{-->}[ll]_{\bx{g}}  \ar@/^3ex/[llll]|{\bx{f}}
}
\end{equation}
\end{minipage}
\end{definition}

\noindent
The next result shows that in a  category with a stable system of monos \M{} and with $\M{}$-partial map classifiers, the final pullback complement of $m \circ l$, with $m \in \M{}$, can be obtained by taking the pullback of $T(l)$ along $\olm$. This 
means that if the embedding morphism of an  \agree{} rule is the partial map classifier of  $K$, i.e., $K \upmonoto{\eta_K} T(K)$, then the first phase of the \agree{} rewriting algorithm of Definition~\ref{def:pbcpo-rewriting} actually builds the final pullback complement of the left-hand side of the rule and of the match. This will allow us to relate the \agree{} approach with others based on the construction of final pullback complements. 

\begin{theorem}[building final pullback complements] 
\label{theorem:sqpo}
Let $\catC$ be a category with pullbacks, 
with a stable system of monos \M{} and with an \M{}-partial map classifier $(T,\eta)$.
Let   $K \uprto{l} L$ be an arrow in \catC{} and $L \upmonoto{m} G$ be a mono in \M{}. Consider the naturality square built over  $K \uprto{l} L$ on the left of Figure~\ref{fig:FPBCasPB}, which is a pullback because $\eta$ is cartesian, and 
let $G \uplto{a} D \uprto{n'} T(K)$ be the pullback of $G \uprto{\olm}
T(L) \uplto{T(l)} T(K)$. Then $K \uprto{n} D \uprto{a} G$ is a final
pullback complement of   $K \uprto{l} L \uprto{m} G$, where $n$ is the only arrow 
making the right triangle commute and the top square a pullback.
\hide{as observed in Fact~\ref{fact:rule-match}.} 
\end{theorem}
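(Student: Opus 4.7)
The plan is to verify conditions (1) and (2) of Definition~\ref{def:fpbc} in turn; only the uniqueness half of (2) is genuinely delicate.

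For condition~(1) and the existence of $n$: I would stack the pullback defining $D$ (the square with vertices $D, T(K), G, T(L)$ and arrows $n', a, T(l), \ol{m}$) below the candidate square built from $l, n, m, a$. Substituting $\ol{m} \circ m = \eta_L$ (from the defining pullback of $\ol{m}$) and $\ol{m} \circ a = T(l) \circ n'$ (from the $D$-pullback), the outer rectangle collapses to the naturality square of $\eta$ at $l$, which is a pullback because $\eta$ is cartesian. Since the bottom square is a pullback by construction, the decomposition property yields a unique $n : K \to D$ with $n' \circ n = \eta_K$ and $a \circ n = m \circ l$, and makes the top square a pullback. For the existence half of condition~(2), given the auxiliary pullback with corners $K', D', L, G$ and arrows $e, d, f, m$ (so $m \circ d = f \circ e$) together with $h : K' \to K$ satisfying $l \circ h = d$, stability of $\M$ makes $e \in \M$, so $\varphi(e, h) : D' \to T(K)$ is defined. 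Two routine pastings identify both $T(l) \circ \varphi(e, h)$ and $\ol{m} \circ f$ with the classifying arrow of the partial map $(e, d)$, so they coincide. The pair $(\varphi(e, h), f)$ is therefore a cone on the $D$-pullback, yielding a unique $g : D' \to D$ with $n' \circ g = \varphi(e, h)$ and $a \circ g = f$, and the equation $g \circ e = n \circ h$ follows by comparing $n'$- and $a$-projections.

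The main obstacle is uniqueness. For any other $g'$ meeting the same conditions, the universal property of $D$ reduces $g = g'$ to $n' \circ g' = \varphi(e, h)$. My strategy rests on the auxiliary claim that the square with top edge $\id_K$, left edge $n$, right edge $\eta_K$, and bottom edge $n'$ is a pullback. Granted this, the square with top edge $h$, left edge $e$, right edge $n$, and bottom edge $g'$ is itself a pullback by the decomposition property applied to its horizontal pasting with the pullback from condition~(1): the outer rectangle is precisely the given pullback with top edge $d$ and bottom edge $f$. Pasting the resulting pullback with the auxiliary pullback then exhibits the square with top $h$, left $e$, right $\eta_K$, and bottom $n' \circ g'$ as a pullback, and uniqueness of classifying arrows forces $n' \circ g' = \varphi(e, h)$.

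The remaining task is the auxiliary claim. Given $x : X \to D$ and $y : X \to K$ with $n' \circ x = \eta_K \circ y$, composing with $T(l)$ gives $\ol{m} \circ a \circ x = \eta_L \circ l \circ y = \varphi(\id_X, l \circ y)$. On the other side, pulling back $m : L \monoto G$ along $a \circ x$ yields $m_X : X_0 \monoto X$ in $\M$ (by stability) and $l_X : X_0 \to L$ with $\ol{m} \circ a \circ x = \varphi(m_X, l_X)$ by the same pasting argument. Uniqueness of classifying arrows identifies the partial maps $(m_X, l_X) \sim (\id_X, l \circ y)$, which forces $m_X$ to be an iso and hence $a \circ x = m \circ l \circ y$. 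The cone with components $\eta_K \circ y$ and $m \circ l \circ y$ is then compatible on the $D$-pullback (both composites equal $\eta_L \circ l \circ y$), and both $x$ and $n \circ y$ realise this cone, so $x = n \circ y$ by universality of $D$, closing the argument.
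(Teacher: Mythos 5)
Your proof is correct, and for condition (1), the construction of $n$, and the existence half of condition (2) it follows essentially the same route as the paper: decomposition of the stacked pullbacks to get $n$, the identification $T(l)\circ\varphi(e,h)=\varphi(e,d)=\olm\circ f$ by two pastings, the mediating arrow $g$ into the $D$-pullback, and verification of $g\circ e=n\circ h$ on the two projections. Where you genuinely diverge is uniqueness. The paper proves directly that $K' \uprto{h} K \upmonoto{\eta_K} T(K) \uplto{n'\circ\hat{g}} D' \uplto{e} K'$ is a pullback, by chasing an arbitrary test cone through the classifying pullback of $(e,d)$ and the cartesian naturality square, and then concludes $n'\circ\hat{g}=\varphi(e,h)$ by uniqueness of classifying arrows. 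You instead isolate the auxiliary fact that the square with sides $\id_K$, $n$, $\eta_K$, $n'$ is a pullback, i.e.\ that $n'=\ol{n}=\varphi(n,\id_K)$, prove it by a generalized-element argument that exploits the fact that classification reflects equivalence of partial maps (the spans $(m_X,l_X)$ and $(\id_X,l\circ y)$ classified by the same arrow $\olm\circ a\circ x$ must be isomorphic, forcing $m_X$ iso and $a\circ x=m\circ l\circ y$), and then obtain $n'\circ g'=\varphi(e,h)$ by pure pasting: the square $(h,e,n,g')$ is a pullback because the outer square is the given one and the condition-(1) square is a pullback, and composing it with the auxiliary square yields the classifying pullback of $(e,h)$. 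Your route buys a clean, reusable lemma --- the pullback projection $n'$ classifies the partial map $(n,\id_K)$ --- which is precisely the fact underlying the more abstract treatment in Appendix~\ref{app:abstract} (where $q=\ol{n'}$), and it confines the quantifier-heavy part of uniqueness to a statement not involving $K',D',d,e,f,h$; the paper's chase uses only the universal property of the classifier, never its injectivity on partial maps, at the price of a longer explicit mediating-arrow construction. Two steps you leave implicit are standard and unproblematic: $\varphi(\id_X,k)=\eta_L\circ k$ (since $\eta_L$ is a mono in \M{}), and the ``outer and right squares pullbacks plus commuting left square imply left square pullback'' form of the decomposition lemma, which is slightly stronger than the version literally stated in Sect.~\ref{sec:preliminaries} but immediately derivable from it.
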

\vspace{-.5cm}
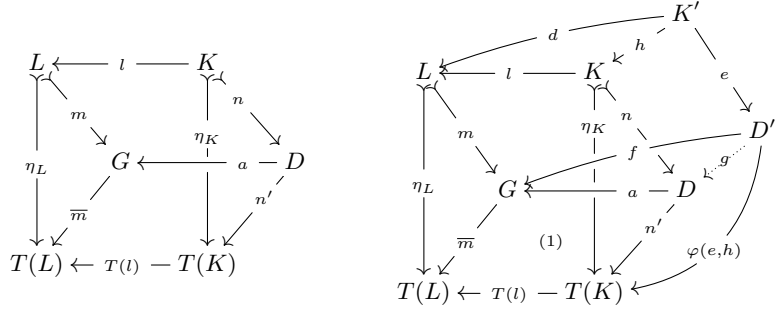
\begin{figure}
\hfill 
$ \xymatrix@C=1pc@R=0.8pc{
& & &\\
L \ar@{ >->}[ddr]|{\bx{m}}     \ar@{ >->}[dddd]|{\bx{\eta_L}} 
& & K \ar[ll]|{\bx{l}}    \ar@{ >->}[dddd]|(.35){\bx{\eta_K}}|(.5){\hole}    \ar@{ >->}[ddr]|(.35){\bx{n}} \\
& & &  \\
& G  \ar[ddl]|{\bx{\overline{m}}}  
& & D  \ar[ddl]|(.35){\phantom{\big(}n'\phantom{\big(}}  \ar[ll]|(.3){\bx{a}}\\
& \\
T(L)  
& & T(K) \ar[ll]|{\bx{T(l)}} 
}$
\hfill
$ \xymatrix@C=1pc@R=0.8pc{
& & & K'   \ar@/_1ex/[dlll]|{\bx{d}}    \ar[ddr]|{\bx{e}}  \ar[dl]|{\bx{h}}   \\
L \ar@{ >->}[ddr]|{\bx{m}}     \ar@{ >->}[dddd]|{\bx{\eta_L}} 
& & K \ar[ll]|{\bx{l}}    \ar@{ >->}[dddd]|(.23){\bx{\eta_K}}|(.4){\hole}|(.53){\hole}    \ar@{ >->}[ddr]|(.35){\bx{n}}|(.63){\hole} \\
& & & & D'   \ar@{..>}[dl]|{{g}}   \ar@/_1ex/[dlll]|{\bx{f}}   \ar@/^5ex/[dddll]|{\phantom{\big(}\varphi(e,h)\phantom{\big(}}   \\
& G  \ar[ddl]|{\bx{\overline{m}}}  \ar@{}[ddr]|{(1)}   
& & D  \ar[ddl]|(.35){\phantom{\big(}n'\phantom{\big(}}  \ar[ll]|(.3){\bx{a}}\\
& \\
T(L)  
& & T(K) \ar[ll]|{\bx{T(l)}} 
}$\hfill\mbox{}
\caption{Constructing the final pullback complement of $m \circ l$ with a pullback.}
\label{fig:FPBCasPB}
\end{figure}
\vspace{-.5cm}
\begin{proof}
By the decomposition property we have that $K \upmonoto{n} D  \uprto{a} G$ is a pullback complement of  $K \uprto{l} L  \upmonoto{m} G$, and $n \in \M{}$ by stability. We have to show that the pullback complement is final, i.e.~that 
%
given a pullback  $G \uplto{m} L \uplto{d} K' \uprto{e} D' \uprto{f} G$  and an arrow   $K' \uprto{h} K$ such that  $l \circ h = d$, as shown on the right of Figure~\ref{fig:FPBCasPB}, there is a unique arrow $D' \uprto{g} D$ such that $n \circ h = g \circ e$ and $a \circ g = f$. We present here the \emph{existence} part, while the proof of \emph{uniqueness} is in Appendix \ref{sec:AppProofs}.

 Note that $K' \upmonoto{e} D'$ is in \M{} by stability. 
By the properties of the \M{}-partial map classifier $T$, there is a unique arrow $D' \uprto{\varphi(e,h)} T(K)$ such that $\eta_K \circ h = \varphi(e,h) \circ e$ and the square is a pullback. We will show below that $ \olm \circ f = T(l) \circ \varphi(e,h)$, hence by the universal property of the pullback $(1)$ there is a  unique arrow $D' \uprto{g} D$ such that $n' \circ g = \varphi(e,h)$ and $a \circ g = f$. It remains to show that $n \circ h = g \circ e$: by exploiting again pullback $(1)$, it is sufficient to show that (i) $a \circ n \circ h = a \circ g \circ e$ and (ii) $n' \circ n \circ h = n' \circ g \circ e$. In fact we have, by simple diagram chasing:
\begin{description}
\item (i) $a \circ n \circ h =  m \circ l \circ h = m \circ d = f \circ e = a \circ g \circ e$
\item (ii) $n' \circ n \circ h = \eta_K \circ h =  \varphi(e,h) \circ e =    n' \circ g \circ e$
\end{description}

We still have to show that $ \olm \circ f = T(l) \circ \varphi(e,h)$. This follows by comparing the following two diagrams, where all squares are pullbacks, either by the statements of Section~\ref{sec:preliminaries} or (the last to the right) by assumption.
Clearly, also the composite squares are pullbacks, but then the bottom arrows must both be equal to $\varphi(e,d)$, as in Equation~(\ref{pb:pmc}). Therefore we conclude that  $ \olm \circ f = 
\varphi(e,d) = T(l) \circ \varphi(e,h)$. 
$$
 \xymatrix@C=4pc@R=3pc{
L \ar@{ >->}[d]|{\bx{\eta_L}} \ar@{}[rd]|{PB~(\ref{pb:eta})} & 
K \ar[l]|{\bx{l}} \ar@{ >->}[d]|{\bx{\eta_K}} \ar@{}[rd]|{PB~(\ref{pb:pmc})}& 
K' \ar[l]|{\bx{h}} \ar@{ >->}[d]|{\bx{e}} \ar@/_4ex/[ll]|{\bx{d}} 
\\
T(L) & 
T(K) \ar[l]|{\bx{T(l)}}& 
D' \ar[l]|(.5){\bx{\varphi(e,h)}}
}
\quad\quad
\xymatrix@C=4pc@R=3pc{
L \ar@{ >->}[d]|{\bx{\eta_L}} \ar@{}[rd]|{PB~(\ref{pb:olm})} & 
L \ar[l]|{\bx{id_L}} \ar@{ >->}[d]|{\bx{m}}& 
K' \ar[l]|{\bx{l \circ h}} \ar@{ >->}[d]|{\bx{e}} \ar@/_4ex/[ll]|{\bx{d}} 
\\
T(L) & 
G \ar[l]|{\bx{\olm}}& 
D' \ar[l]|(.5){\bx{f}}}
$$
\end{proof} 

The statement of Theorem~\ref{theorem:sqpo} can be formulated equivalently in a more abstract way, as the fact that composing functor $T$ with a pullback along $\olm$ one gets a functor that is right adjoint to the functor taking pullbacks along $m$. This alternative presentation and its proof are presented in Appendix~\ref{app:abstract}.

\subsection{{\agree} subsumes \sqpo  rewriting with injective matches}
\label{subsec:pbcpovssqpo}

Using Theorem~\ref{theorem:sqpo} it is easy to show that  the \agree{} approach is a conservative extension of the \sqpo{} approach, because the two coincide if the embedding of the \agree{} rule is the arrow injecting $K$ into its partial map classifier.
 
\begin{theorem}[\agree{} subsumes \sqpo{} with monic matches]
\label{theorem:agreeVSsqpo}
Let $\catC$ be a category with all pullbacks, 
with \M{}-partial map classifiers $\eta: Id_{\catC} \dotarrow T$ for a stable system of monos \M{}, 
and with pushouts along arrows in \M{}.  
Let $\rho = L \uplto{l} K \uprto{r} R$ be a rule and $m:L\monoto G$
be a match in \M{}. Then 
$$G \To_{\rho,m}^{\sqpo{}} H \qquad \mbox{ if and only if } \qquad G \To_{(l,r,\eta_K),m}^{\agree{}} H$$
\noindent In words, the application of rule $\rho$ to match $m$ using the \sqpo{} approach has exactly the same effect of applying to $m$ the same rule enriched with the embedding $K \upmonoto{\eta_K} T(K)$ using the \agree{} approach.
\end{theorem}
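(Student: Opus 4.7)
The plan is to reduce the claim to Theorem~\ref{theorem:sqpo}, which has already done the heavy lifting of identifying a final pullback complement with a pullback built using the partial map classifier. Both approaches are two-phase: (i) construct an object $D$ together with arrows $n: K \to D$ and $g: D \to G$, and (ii) take the pushout of $R \uplto{r} K \uprto{n} D$. The second phase is literally the same in both approaches (and the pushout exists in \catC{} by the assumption on pushouts along monos in \M{}, noting that $n\in\M$ by stability in both constructions), so it suffices to show that phase (i) produces the same span $K\uprto{n}D\uprto{g}G$ in both cases, up to the canonical isomorphism.

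The first step is to unfold the AGREE construction with embedding $t=\eta_K$. By Definition~\ref{def:pbcpo-rewriting}, we set $l' = \varphi(\eta_K,l): T(K) \to T(L)$, and $D$ is defined as the pullback of $G \uprto{\olm} T(L) \uplto{l'} T(K)$. The key observation is that $l' = T(l)$: indeed, the naturality square of $\eta$ at $l: K \to L$ is a pullback (since $\eta$ is cartesian), so by the uniqueness clause in the definition of the partial map classifier applied to the partial map $(\eta_K, l): T(K) \parto L$, we must have $\varphi(\eta_K,l) = T(l)$.

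Now Theorem~\ref{theorem:sqpo} tells us exactly that the pullback of $G \uprto{\olm} T(L) \uplto{T(l)} T(K)$, together with the unique mediating arrow $n:K \to D$ into $D$, is a final pullback complement of $K \uprto{l} L \upmonoto{m} G$. On the other hand, the arrow $n:K \to D$ introduced in the ``(remark)'' of Definition~\ref{def:pbcpo-rewriting} is, by the decomposition property, also characterised by $n'\circ n = \eta_K$ and $g\circ n = m\circ l$ with $(l,n)$ a pullback of $(m,g)$ -- so it coincides with the mediating arrow from Theorem~\ref{theorem:sqpo}. Therefore the span $K \uprto{n} D \uprto{g} G$ produced by AGREE phase (a) is precisely the final pullback complement used in the first phase of \sqpo{} rewriting, as recalled at the start of Section~\ref{sec:sqpo}.

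Combining the two phases yields the biconditional: since both approaches take the pushout of $R \uplto{r} K \uprto{n} D$ from the same span, they produce isomorphic results $H$, and conversely any $H$ obtained by one approach is obtained by the other. I do not expect any serious obstacle: the only subtle point is checking the identification $\varphi(\eta_K,l)=T(l)$, which is immediate from cartesianness of $\eta$, and then Theorem~\ref{theorem:sqpo} supplies the rest.
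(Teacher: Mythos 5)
Your proposal is correct and follows essentially the same route as the paper: both reduce phase (a) to Theorem~\ref{theorem:sqpo} and observe that the pushout phase is identical by definition. The only difference is that you spell out the identification $\varphi(\eta_K,l)=T(l)$ via cartesianness of $\eta$ and uniqueness of the classifying arrow, a point the paper's proof leaves implicit; your justification of it is correct.
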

\begin{proof}
Since the embedding of the rule is arrow $\eta_K: K \monoto T(K)$, phase (a) of \agree{} rewriting (Definition~\ref{def:pbcpo-rewriting})  is exactly the construction that is shown, in Theorem~\ref{theorem:sqpo}, to build  $K \uprto{n} D \uprto{a} G$ as a final pullback complement of  $K \uprto{l} L \uprto{m} G$, therefore it coincides with the construction of the left square of the {\sqpo{}} approach. The second phase, i.e.~the construction of the pushout of  $K \uprto{n} D$ and  $K \uprto{r} R$ is identical for both approaches by definition.
\end{proof}


\subsection{{\agree} subsumes polarized node cloning on graphs} 
\label{subsec:pbcpovspolclo}

We now show that {\agree} rewriting allows to simulate rewriting with
polarized cloning on graphs, which is defined in \cite{DuvalEP12} by
using the polarized graphs of Definition~\ref{def:pol-cat}. 
\hide{
Briefly,
in a polarized graph each node may be annotated with $+$ or $-$ (non
exclusively), and only nodes annotated with $+$ (resp. $-$) may have
outgoing (resp. incoming) edges.}
Polarization is used in rewriting to control the copies of edges not
matched but incident to the matched nodes. 
\hide{
that
connect a cloned node with nodes outside of the match: a node $n$ can
be copied with all its edges, only with its outgoing edges, only with
its incoming edges, or without any of its edges.}

\hide{Thus, polarization is a way to tune \sqpo rewriting:
indeed, \sqpo rewriting is the 
special case of rewriting with polarized cloning
where every node is copied with all its  edges. }
\hide{%
Rewriting with polarized cloning is defined in \cite{DuvalEP12}  
by introducing the notion of \emph{polarized graph}.
A  polarized graph is a graph where each node may be annotated 
with $+$ or $-$ (non exclusively).
In a polarized graph, only 
nodes annotated with $+$ (resp. $-$) may have outgoing 
(resp. incoming) edges. 
We recall the definitions of polarized graphs and 
of rewriting with polarized cloning in Definitions~\ref{defi:pol-cat} 
and~\ref{defi:psqpo}; 
more details on this subject can be found in \cite{DuvalEP12}. 
}

\hide{
In order to define rewriting with polarized cloning we need to introduce some 
functors relating categories $\Grpol$ and $\Gr$.}

\hide{*** Then, using the fact that the category 
of polarized graphs has a \emph{relative} partial map classifier,
we prove in Theorem~\ref{theorem:psqpo} that 
each rewriting step with polarized cloning
can be simulated by an {\agree} rewriting step in the category of graphs.
}

\begin{fact}
\label{defi:pol-depol}
The \emph{underlying graph} of a polarized graph
$\grpol{X}=(X,N_X^+,N_X^-)$ is $X$. This defines a functor
$\depol:\Grpol\to\Gr$ which has both a right- and a left-adjoint
functor denoted $\polar$ and $\polar^{\pm}:\Gr\to\Grpol$, resp.,
i.e.~$\polar^{\pm}\dashv \depol \dashv \polar$.

Functor $\polar$ maps each graph $X$ to the polarized graph
\emph{induced by} $X$, defined as $\grpol{X}=(X,N_X,N_X)$, and each
graph morphism $f: X \to Y$ to itself; it is easy to check that
$\polar(f): \polar(X) \to \polar(Y)$ is a \emph{strict} polarized
graph morphism. Furthermore we have that $\depol\circ\polar =
\Id_{\Gr}$, and we denote the unit of adjunction $\depol \dashv
\polar$ as $\unit:\Id_{\Grpol}\dotarrow \polar\circ\depol$, thus
$\unit_\grpol{X}: \grpol{X} \to \polar(\depol(\grpol{X}))$.

Functor $\polar^\pm$ maps each graph $X$ to the polarized graph
$\grpol{X}=(X,N^{+}_X,N^{-}_X)$, where a node is in $N_X^+$ (resp. in
$N_X^-$) if and only if it has at least one outgoing (resp. incoming)
edge in $X$.  Since $\depol$ has a left adjoint, we have that $\depol$
preserves limits and in particular pullbacks.
\end{fact}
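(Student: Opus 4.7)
The plan is to verify the two adjunctions $\polar^{\pm} \dashv \depol \dashv \polar$ by exhibiting the obvious natural bijections of hom-sets, from which all remaining claims follow. First I would check that $\depol$, $\polar$ and $\polar^{\pm}$ are well-defined functors. For $\depol$ this is immediate from the definition of morphism in $\Grpol$, and for $\polar$ it is trivial because in $\polar(Y) = (Y, N_Y, N_Y)$ the polarization constraint is vacuous, so $\polar(f) = f$ is always a polarized morphism. The only point with content is functoriality of $\polar^{\pm}$: if a node $n$ lies in the positive part of $\polar^{\pm}(X)$, witnessed by an edge $n \uprto{e} p$ in $X$, then $f(e)$ is an outgoing edge at $f(n)$ in $Y$, so $f(n)$ lies in the positive part of $\polar^{\pm}(Y)$; the dual argument handles the negative part.

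For the right adjunction $\depol \dashv \polar$, the bijection $\Hom_{\Gr}(\depol(\grpol{X}), Y) \cong \Hom_{\Grpol}(\grpol{X}, \polar(Y))$ is essentially trivial: since $\polar(Y)$ carries the maximal polarization, the inclusions $f(N_X^+) \subseteq N_Y$ and $f(N_X^-) \subseteq N_Y$ hold vacuously for any graph morphism $f: X \to Y$, and the inverse map is simply the action of $\depol$. For the left adjunction $\polar^{\pm} \dashv \depol$, the bijection $\Hom_{\Grpol}(\polar^{\pm}(X), \grpol{Y}) \cong \Hom_{\Gr}(X, \depol(\grpol{Y}))$ requires verifying that a graph morphism $f: X \to \depol(\grpol{Y})$ automatically lifts to a polarized morphism $\polar^{\pm}(X) \to \grpol{Y}$; this is the same argument as for functoriality of $\polar^{\pm}$, exploiting the polarization axiom in $\grpol{Y}$ that every edge goes from $N_Y^+$ to $N_Y^-$.

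Naturality of both bijections in both variables is a one-line diagram chase using that $\depol$, $\polar$ and $\polar^{\pm}$ all act as the identity on underlying graph morphisms, and the identity $\depol \circ \polar = \Id_{\Gr}$ is immediate on objects and arrows. Finally, preservation of limits (and in particular pullbacks) by $\depol$ is the standard consequence of its being right adjoint to $\polar^{\pm}$. I expect no real obstacle: the content of this fact is simply that $\polar$ imposes no additional constraint on morphisms, whereas $\polar^{\pm}$ records exactly the minimal polarization forced by incident edges, and these are precisely the universal properties needed for the two adjunctions.
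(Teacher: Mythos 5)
Your verification is correct and is exactly the routine hom-set argument that the paper implicitly relies on: the Fact is stated without proof, and your observation that $\polar$ imposes no constraint on morphisms while the polarization axiom of the codomain makes any graph morphism automatically polarized out of $\polar^{\pm}(X)$ is the whole content of the two adjunctions, with limit preservation by $\depol$ then standard. The only item of the statement you pass over is strictness of $\polar(f)$, but that is immediate since $f(N_X^+)=f(N_X)=f(N_X)\cap N_Y$ in $\polar(Y)$.
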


The category $\Grpol$ has final pullback complements along strict monos:
their construction is given in \cite[Appendix]{DuvalEP12long}.
\hide{
Roughly speaking, a \psqpo rewrite step is made of 
a final pullback complement in $\Grpol$ followed by
a pushout in $\Gr$, thus generalizing a \sqpo rewrite step 
for which both phases are done in the same category.} 

\begin{definition}[{\psqpo} rewriting]
\label{defi:psqpo}
A \emph{\psqpo rewrite rule} $\rho$ 
is made of a span of graphs $\spa{L}{l}{K}{r}{R}$ and a 
polarized graph $\grpol{K}=(K,N_K^+,N_K^-)$ with underlying graph $K$. 
A \emph{\psqpo match} of the \psqpo rewrite rule $\rho$ is 
a mono $m:L \monoto G$ in $\Gr$.
A \emph{\psqpo rewriting step} $G \Rightarrow_{\rho,m}^{\psqpo} H$ 
is constructed as follows:
  \begin{enumerate}[label=(\alph*)]

  \item The left-hand-side $l$ of the rule $\rho$ gives rise to 
  a morphism $\wh{l}=\polar(l)\circ \unit_{\grpol{K}} : 
  \grpol{K}\to \polar(L)$ in $\Grpol$. 
  The match $m$ gives rise
  to a strict mono $\polar(m):\polar(L) \monoto \polar(G)$ in $\Grpol$. 
  Then  
  $\grpol{K} \uprto{n} \grpol{D} \uprto{g} \polar(G) $ is constructed as  the final pullback complement of 
  $\grpol{K} \uprto{\wh{l}} \polar(L) \uprto{\polar(m)} \polar(G)$ in category $\Grpol$.

  \item Since $\depol(\grpol{K})=K$, we get
  $\depol(n):K \to \depol(\grpol{D})$ in $\Gr$. 
  Then 
  $R \uprto{p} H \uplto{h} D $ 
  is built as the pushout of 
  $R \uplto{r} K \uprto{\depol(n)} \depol(\grpol{D})$ in  category $\Gr$.

  \end{enumerate}

\hide{
\vspace{-0.4cm}
$$ \xymatrix@C=3pc{
L \ar@{ >->}[d]_{m} & 
  \ar@{}[rd]|{FPBC} \polar(L) \ar@{ >->}[d]_{\polar(m)} & 
  \grpol{K} \ar[l]_{\wh{l}} \ar@{ >->}[d]^{n} & 
  \ar@{}[rd]|{PO} K \ar[r]^{r} \ar@{ >->}[d]_{\depol(n)} & 
  R \ar[d]^{p} \\
G & 
  \polar(G) & 
  \grpol{D}  \ar[l]^{g} & 
  \depol(\grpol{D}) \ar[r]_{h} & 
  H \\
}$$}
\end{definition}

\hide{Following \cite{CL1}, we define a \emph{stable system of monos} 
in a category as a class 
of monomorphisms containing the isomorphisms, closed under composition  
and stable under pullbacks. 
In \cite{CL2}, partial map classifiers are defined 
with respect to any stable system of monos. 
Until now we have only used the stable system made of all monos in $\catC$, 
but now we use another system. 
Section~\ref{sec:pbcpo} can be generalized as follows.
Let $\catC$ be a category and let $M$ be a stable system of monos in $\catC$.
Let us assume that $\catC$ has pullbacks along monos in $M$ 
and has a partial map classifier $(T,\eta)$ with respect to $M$. 
Then Definition~\ref{def:pbcpo} is modified by assuming that 
the embedding $t:K \monoto T_K$ for each rule is in $M$ 
and that each match $m:L \monoto G $ is in $M$. 
The {\agree} rewriting steps are defined as in 
Definition~\ref{def:pbcpo-rewriting}.
The stability of $M$ ensures that the mono $n:K \monoto D $ is in $M$. 
It is easy to check that 
the strict monos in $\Grpol$ form a stable system of monos $M$ and that 
the category $\Grpol$ has a partial map classifier with respect to $M$:
it maps each $\grpol{K}$ to $\TT(\grpol{K})$ which is the disjoint union of
$\grpol{K}$ with a vertex $\ourbot$ and with an edge $\ourbot_{n,p} :
n\to p$ for each pair of nodes 
$(n,p)\in(N_K^+ +\{\ourbot\})\times (N_K^- +\{\ourbot\})$. 
In addition, the functor $\depol:\Grpol\to\Gr$ has a left adjoint,
which maps each graph $X$
to the polarized graph $(X,N_X^+,N_X^-)$ 
where a node is in $N_X^+$ (resp. in $N_X^-$) if and only if 
it has at least one outgoing (resp. incoming) edge in $X$. 
So, $\depol$ preserves pullbacks.
}

Recall that, as observed in Sect.~\ref{sect:examples-classifiers}, category
$\Grpol$ has an \Str{}-partial map classifier $(\TT, \eta)$. This will
be exploited in the next result.
 
\begin{theorem}[\agree{} subsumes polarized node cloning on graphs]
\label{theorem:psqpo}
Let $\rho$ be a \psqpo rule made of span $L \uplto{l} K \uprto{r} R$
and polarized graph $\grpol{K}=(K,N_K^+,N_K^-)$. Consider the
component on $\grpol{K}$ of the natural transformation $\eta:
Id_{\Grpol} \dotarrow \TT$, and let $T_K=\depol(\TT(\grpol{K}))$ and
$t= \depol(\eta_{\grpol{K}}): \depol(\grpol{K}) \to
\depol(\TT(\grpol{K}))$, thus $t: K \to T_K$. Furthermore, let $m: L
\monoto G$ be a mono.  Then
$$G \To_{\rho,m}^{\psqpo} H \qquad \mbox{ if and only if } \qquad G \To_{(l,r,t),m}^{\agree{}} H$$
\hide{
\noindent In words, the application of rule $\rho$ to match $m$ using the \psqpo{} approach has exactly the same effect of applying to $m$ the span of $\rho$ enriched with the embedding $t: K \to T_K$ using the \agree{} approach in category $\Gr$.}
\end{theorem}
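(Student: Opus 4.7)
The plan is to transform a PSqPO rewriting, which occurs partly in $\Grpol$ and partly in $\Gr$, into a single AGREE rewriting step in $\Gr$ by applying $\depol$ to the first phase. First I would invoke Theorem~\ref{theorem:sqpo} inside the category $\Grpol$ equipped with the stable system $\Str$ of strict monos and the $\Str$-partial map classifier $(\TT,\eta)$ recalled in Section~\ref{sect:examples-classifiers}. Since $\polar(m)$ is a strict mono, the final pullback complement $\grpol{K} \uprto{n} \grpol{D} \uprto{g} \polar(G)$ produced in phase~(a) of PSqPO can equivalently be obtained as the pullback in $\Grpol$ of $\polar(G) \uprto{\ol{\polar(m)}} \TT(\polar(L)) \uplto{\TT(\wh{l})} \TT(\grpol{K})$, with $n$ induced by the naturality square for $\wh{l}$.

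Next I would apply $\depol$ to that pullback. By Fact~\ref{defi:pol-depol} the functor $\depol$ is a right adjoint ($\polar^{\pm} \dashv \depol$) and hence preserves pullbacks, so the resulting square in $\Gr$ is still a pullback.

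The core of the proof is then to identify each ingredient of that image with the AGREE data built from the rule $(l,r,t)$. The routine identifications $\depol(\polar(G)) = G$, $\depol(\TT(\grpol{K})) = T_K$ and $\depol(\eta_{\grpol{K}}) = t$ hold by definition of $\polar$, $T_K$ and $t$. The crucial computations are $\depol(\TT(\polar(L))) = T(L)$ and $\depol(\eta_{\polar(L)}) = \eta_L$: since every node of $\polar(L)$ is simultaneously positive and negative, $\TT(\polar(L))$ adds a fresh node $\ourbot$ together with edges to and from every node, so forgetting polarity yields precisely the $T(L)$ of Section~\ref{sect:examples-classifiers}. From this, preservation of pullbacks together with the universal property defining $\ol{m}$ give $\depol(\ol{\polar(m)}) = \ol{m}$. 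Analogously, applying $\depol$ to the naturality pullback for $\wh{l}$ (using $\depol(\wh{l}) = l$, which follows from $\depol(\unit_{\grpol{K}}) = \id_K$ and $\depol\circ\polar = \Id_\Gr$) produces the pullback square whose unique classifying arrow is $\varphi(t,l) = l'$, so $\depol(\TT(\wh{l})) = l'$.

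At this point the image pullback is exactly the one demanded by Definition~\ref{def:pbcpo-rewriting}(a) for the rule $(l,r,t)$ and match $m$, with apex $\depol(\grpol{D})$ and legs $\depol(g)$ and $\depol(n')$. Moreover $\depol(n)$ satisfies the same commuting triangles as the mono built in the remark of Definition~\ref{def:pbcpo-rewriting}, so by uniqueness they coincide; phase~(b) is then literally the same pushout $R \uplto{r} K \uprto{\depol(n)} \depol(\grpol{D})$ in $\Gr$ for both approaches. The main obstacle is precisely the comparison of the two partial map classifiers, i.e.~the identities $\depol(\TT(\polar(L))) = T(L)$ and $\depol(\TT(\wh{l})) = \varphi(t,l)$; this is the one place where the specific interplay between strictness of monos in $\Grpol$ and the saturated polarisation of $\polar(L)$ is genuinely used, and everything else then reduces to routine diagram chases via the adjunction $\polar^{\pm} \dashv \depol$.
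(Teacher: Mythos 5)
Your proposal is correct and follows essentially the same route as the paper's proof: apply Theorem~\ref{theorem:sqpo} in $\Grpol$ with the $\Str$-partial map classifier to realise the final pullback complement as a pullback along $\ol{\polar(m)}$, push the whole diagram through $\depol$ (which preserves pullbacks since it has a left adjoint), identify $\depol(\TT(\polar(L)))$ with $T(L)$ and $\depol(\TT(\wh{l}))$ with $\varphi(t,l)$, and observe that the pushout phase is literally the same in $\Gr$. You merely spell out in more detail the classifier comparison and the uniqueness argument for $\depol(n)$, which the paper leaves as an easy check.
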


\begin{proof}
  The first phase of {\psqpo} rewriting consists of building the final
  pullback complement of $(\polar(m),\wh{l})$ in category
  $\Grpol$. According to Theorem~\ref{theorem:sqpo}, since $\polar(m)$
  is strict such final pullback complement can be obtained as the top
  square in the diagram below to the left, where both squares are
  pullbacks in $\Grpol$. The second phase consists of taking the
  pushout of morphisms $K \uprto{r} R$ and $\depol(n): K \to
  \depol(D)$ in $\Gr$.
 
  By applying functor $\depol$ to the left diagram we obtain the
  diagram below to the right in $\Gr$, where both squares are
  pullbacks because $\depol$ preserves limits. In fact, recall that
  $\depol \circ \polar = Id_\Gr$, that $K = \depol(\grpol{K})$ and
  that $t = \depol(\eta_{\grpol{K}})$; the fact that $T(L) =
  \depol(\TT(\polar(L)))$ can be checked easily by comparing the
  construction of the ($\Str$-)partial map classifiers in $\Gr$ and in
  $\Grpol$.
$$ \xymatrix@C=4pc@R=1.3pc{
\ar@{}[dr]|{PB} \polar(L) \ar@{ >->}[d]^{\polar(m)} 
  \ar@{ >->}@/_4ex/[dd]_(.6){\eta_{\polar(L)}}^(.6){=} & 
\grpol{K} \ar[l]_{\wh{l}} \ar@{ >->}[d]_{n} 
    \ar@{ >->}@/^4ex/[dd]^(.6){\eta_{\grpol{K}}}_(.6){=} \\
\ar@{}[dr]|{PB} \polar(G) \ar[d]^{\ol{\polar(m)}} & 
  \grpol{D} \ar[l]_{g} \ar[d]_{q=\oln} \\ 
  \TT(\polar(L)) & 
  \TT(\grpol{K}) \ar[l]^{\TT(\wh{l})} \\ } 
\qquad 
\xymatrix@C=4pc@R=1.3pc{
\ar@{}[dr]|{PB} L \ar@{ >->}[d]^{m} 
  \ar@{ >->}@/_4ex/[dd]_(.6){\eta_{L}}^(.6){=} & 
K \ar[l]_{l} \ar@{ >->}[d]|{\depol(n)} 
    \ar@{ >->}@/^6ex/[dd]^(.6){t}_(.6){=} \\
\ar@{}[dr]|{PB} G \ar[d]^{\olm} & 
  \depol(\grpol{D})  \ar[l] \ar[d] \\ 
  T(L) & 
  T_K \ar[l] \\ }
$$

Now, the first phase of \agree{} rewriting with rule $(l,r,t)$ and match $m$ consists of  taking the pullback in $\Gr$ of
 $\olm$ and the only arrow $T_k \to T(L)$ that makes the outer square of the right diagram a pullback. This arrow 
 is precisely $\depol(\TT(\wh{l}))$, and therefore the pullback is exactly the lower square of the right diagram. The second 
 phase consists of taking the pushout of  $K \uprto{r} R$ and of the only arrow $K \to \depol(D)$ that makes the diagram commute; 
 but $\depol(n)$ is such an arrow, thus the pushout is the same computed by the \psqpo{} approach and this concludes the proof.

\hide{to the category $\Grpol$ and to the strict monos. 
It means that in Definition~\ref{defi:psqpo} the final pullback complement 
$(n,g)$ of $(\polar(m),\wh{l})$ 
can be obtained from the pullback 
$(q,g)$ of $(\ol{\polar(m)},\TT(\wh{l}))$ 
by defining $n$ as the unique map such that 
$g\circ n = \polar(m)\circ \wh{l} $ and $q\circ n = \eta_{\grpol{K}}$:
see diagram below on the left. 
The functor $\depol$ is applied to this construction;  
since it preserves pullbacks, 
we get a similar construction in $\Gr$:
see diagram below on the right, where $D=\depol(\grpol{D})$.
Theorem~\ref{theorem:sqpo} proves that we get 
the final pullback complement of $(m,l)$ in $\Gr$. 
Since in the \psqpo approach the pushout phase is done in $\Gr$, 
both $\rho$ and $\rho^\pm$ produce the same result $H$. 
}

\end{proof}




\section{Related Work and Discussion}
\label{sec:discussion}

In this paper we presented the basic definitions of a new approach to
algebraic graph rewriting, called \agree{}. We showed that this
approach subsumes other algebraic approaches like SqPO
(Sesqui-pushout) with injective matches (and therefore DPO and SPO
under mild restrictions, see~\cite[Propositions~12 and~14]{CorradiniHHK06}), as well as its
polarised version PSqPO.  The main feature provided by this approach
is the possibility, in a rule, of specifying which edges shall be
copied as a side effect of the copy of a node. This feature offers
new facilities to
specify applications in which copy of nodes shall be done in an
unknown context, and thus it is not possible to describe in the
left-hand side of the rule all edges that shall be copied together
with the node. As an example, the anonymization of parts of a social
network was described in Sect.~\ref{sec:examples}.

The idea of controlling explicitly in the rule how the right-hand side
should be embedded in the context graph is not new in graph rewriting,
as it is a standard ingredient of the algorithmic approaches.  For
example, in Node Label Controlled (NLC) graph rewriting and its
variations~\cite{EngelfrietR97} productions are equipped with
\emph{embedding rules}, which allow one to specify how the right-hand
side of a production has to be embedded in the context graph obtained
by deleting the corresponding left-hand side. The name of our approach
is reminiscent of those older ones.

\comment{Among the approaches based on categorical constructions, besides those
already discussed, one may wonder whether an \agree rewriting step can
be simulated by one parallel transformation following the
\emph{amalgamation} concept (e.g., \cite{BFH87}).  Roughly speaking,
the notion of amalgamation of graph transformation produces the effect
of several rules in one shot. This is a kind of parallel graph
transformation where the effects of the involved basic rules are
added. The theory of amalgamation has been investigated for the main
graph transformation approaches such as SPO, DPO and more recently
SqPO \cite{Lowe15}. As said earlier, \agree rules generalize classical
algebraic approaches and thus the use of parallel reductions
cannot help to reach all the capabilities offered by the embedding parts
in the \agree rewriting steps.  However, the notion of amalgamation of
the \agree rules still remains as a subject of future investigation.}
Adaptive star grammars \cite{DHJM10} is another framework where node
cloning is performed by means of rewrite rules of the form $ S ::= R $
where graph $S$ has a shape of a star and $R$ is a graph. Cloning
operation, see \cite[Definitions~5 and~6]{DHJM10}, shares the same
restrictions as the sesqui-pushout approach: nodes are
cloned with all their incident edges.

In \cite{Lowe10} a general framework for graph transformations in
span-categories, called \emph{contextual graph rewriting}, briefly CR, has
been proposed.  Using CR, thanks to the notions of rule and of match
that are more elaborated than in other approaches, it is possible to specify 
cloning as in AGREE rewriting, and even more general transformations: e.g., one may create
multiple copies of nodes/edges as a side effect, not only when
cloning items. 
The
left-hand sides of CR rules allow to specify elements that must exist
for the rule to be applicable, called $E$, and also a
context for $E$, i.e.~a part of the graph that will be universally
quantified when the rule is applied, called $U$. A third
component plays the role of embedding the context $U$ in the rest
of the graph. The rule for copying a web page shown in
Figure~\ref{fig_webpagecopy}  could be specified using  CR 
as rule
$E \rightarrowtail U \rightarrowtail L \leftarrow K \rightarrow R$,
where $E = L1, U = L = T(L1)$ and $ K = R = TK1$. Finding a match for
a rule in a graph $G$ involves
finding a smallest subgraph of $G$ that contains $E$ and its complete
context. Thus, even if CR is more general, our approach enhances the expressiveness
of classical algebraic approaches with a form of controlled cloning using
simpler and possibly more natural rules.

%
%
%
%
%
%
 
Bauderon's pullback approach \cite{BauderonJ01} is also related to our
proposal. It was proposed as an algebraic variant of the above
mentioned NLC and ed-NLC algorithmic approaches. Bauderon's approach
is similar, in part, to the pullback construction used in our first
phase of a rewriting step, but a closer analysis is needed and is
planned as future work.  We also intend to explore if there are
relevant applications where \agree{} rewriting in its full generality
(i.e., with possibly non-local rules) could be useful.


Concerning the applicability of our approach to other structures, in practice the requirement of existence of partial maps classifiers looks quite demanding. \agree{} rewriting works in categories of typed/colored graphs, which are used in several applications, because they are slice categories over graphs, and thus toposes. But even more used are the categories of attributed graphs  \cite{DuvalEPR14}, which are not toposes. Under which conditions our approach can be extended or adapted to such structures is an interesting topic that we intend to investigate.

%
%


\subsection*{Acknowledgments}
We are grateful to the anonymous reviewers of former versions of this paper for the insightful and constructive criticisms.

\bibliographystyle{splncs03}


\appendix
\section{Proofs}
\label{sec:AppProofs}

This section is devoted to the proof of Proposition~\ref{prop:locality} and to part of the proof of 
Theorem~\ref{theorem:sqpo}. Let $\catC$ be a category satisfying all conditions of Definition~\ref{def:complement}, where  $(T, \eta)$ is an  \M-partial map classifier.
Let us start with a technical lemma.

\begin{lemma}
\label{lemma:locality}
Object $T(L)\setminus L $ is isomorphic to $T(\ini)$ for each $L$, and furthermore
 $T(l)\setminus l: T(K)\sm K \to T(L)\sm L$ is an iso for each $l:K\to L$.
\end{lemma}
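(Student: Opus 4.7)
The plan is to reduce the lemma to the slogan \emph{$T(L) \sm L$ is terminal for every $L$} and then invoke the fact that any morphism between objects isomorphic to the terminal is an isomorphism. Concretely, I would first show $T(L) \sm L \cong T(\ini) \cong \fin$ for every $L$, and then the second claim will follow because $T(l) \sm l$ is a uniquely determined arrow between two objects isomorphic to $\fin$.

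For the first claim, I would start by identifying $\chi_{\eta_L} = T(\final_L)$: the naturality square of $\eta$ at $\final_L : L \to \fin$ is a pullback (since $\eta$ is cartesian), so by uniqueness of the classifying arrow $\varphi(\eta_L,\final_L)$ we must have $\chi_{\eta_L} = T(\final_L)$. Then by Definition~\ref{def:complement}, $T(L) \sm L$ is the pullback of $T(\final_L)$ along $\ffalse = T(\inifin) \circ \olinifin : \fin \to T(\ini) \to T(\fin)$. I would decompose this pullback along the factorization of $\ffalse$ through $T(\ini)$: the pullback of $T(\final_L)$ along $T(\inifin)$ equals $T(\ini)$, because $T$ preserves pullbacks and the pullback of $\final_L$ along $\inifin$ in $\catC$ is $\ini$ (by strict initiality of $\ini$); then pulling back further along the iso $\olinifin$ yields $\fin$. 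The induced mono $T(L) \sm L \to T(L)$ is, under the identification $T(L) \sm L \cong T(\ini)$, just $T(\init_L)$.

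For the second claim, the arrow $T(l) \sm l$ is well-defined because the naturality square of $\eta$ at $l : K \to L$ is a pullback, which supplies the pullback square~(c) of diagram~(\ref{pb:sub-class-new}) required by Definition~\ref{def:complement}. By the first claim, both $T(K) \sm K$ and $T(L) \sm L$ are isomorphic to $\fin$, so any morphism between them --- in particular the uniquely determined $T(l) \sm l$ --- is an isomorphism.

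The main obstacle will be the pullback decomposition in the first claim, where one has to carefully juggle the factorization of $\ffalse$ through $T(\ini)$, preservation of pullbacks by $T$, and strict initiality of $\ini$ to identify the intermediate pullback with $T(\ini)$. Once this is in place, the second claim falls out essentially for free from the terminal-object argument.
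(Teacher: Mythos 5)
Your proof is correct, and for the first claim it is essentially the paper's own argument in different clothing: the paper stacks the cartesian naturality square~(\ref{pb:eta}) for $\final_L$ on top of a square which, after replacing $\ffalse$ by $T(\inifin)$ and $\fin_{T(\ini)}$ by $T(\id_\ini)$ via the iso $\olinifin$, is the $T$-image of the pullback of $\final_L$ along $\inifin$ (that pullback being $\ini$ by strict initiality); your two-stage pullback, first along $T(\inifin)$ using preservation of pullbacks by $T$, then along the iso $\olinifin$, is the same computation, and like the paper you also record that the complement mono is $T(\init_L)$ up to iso. Where you genuinely diverge is the second claim: the paper builds a second explicit diagram (the naturality square at $l$ over the $T$-image of the pullback square on $\init_K$ and $\init_L$) and reads off that $T(l)\sm l$ is $\id_{T(\ini)}$ up to the identifications, whereas you only use that $T(K)\sm K$ and $T(L)\sm L$ are both isomorphic to the terminal object $\fin$, so the arrow $T(l)\sm l$ --- well defined because the naturality square at $l$ plays the role of square~(c) of diagram~(\ref{pb:sub-class-new}), with $\eta_K,\eta_L\in\M$ --- is automatically an iso, since any arrow between objects isomorphic to $\fin$ is one. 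Your route is shorter and dispenses with the second diagram; the paper's route buys an explicit description of the iso and of the complement monos and keeps the two halves of the proof uniform, but only the bare iso property is used later in the proof of Proposition~\ref{prop:locality}, so nothing is lost. Note that both arguments rely on the fact that $\olinifin:\fin\to T(\ini)$ is an iso, which the paper establishes just before Definition~\ref{def:complement}, so your appeal to it is legitimate.
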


\noindent
\begin{minipage}{.65\textwidth}
\emph{Proof.}\quad First, let us look at the diagram to the right
where $L$ is any object. 
In this diagram the top square is a pullback of shape (\ref{pb:eta}) 
and the bottom square is a pullback because,
up to the isomorphism between $\fin$ and $T(\ini)$ we may replace 
$\fin_{T(\ini)}:T(\ini)\rto \fin$ by $T(\id_{\ini}):T(\ini)\rto T(\ini)$
and $\ffalse:\fin\to T(\fin)$ by $T(\inifin):T(\ini)\rto T(\fin)$, 
so that the bottom square  becomes the image by $T$ of a pullback square. 
Thus, $T(L)\setminus L $ is isomorphic to $T(\ini)$ 
and, up to this iso, $T(L)\setminus \eta_L$ is $T(\init_L):T(\ini) \to T(L)$. 

\medskip
Now, let us look at the diagram to the right
where $l:K\to L$ is any arrow. 
In this diagram the top square is a pullback of shape (\ref{pb:eta}) 
and the bottom square is a pullback because it is the image by $T$ of a pullback square. 
Thus, $T(l)\setminus l:T(K)\setminus K \to T(L)\setminus L$ is an iso. 
\end{minipage}
\begin{minipage}{.35\textwidth}
\centerline{$ \xymatrix@C=4.0pc@R=1.8pc{
\ar@{}[rd]|{PB} 
L \ar@{ >->}[d]|{\bx{\eta_L}} \ar[r]|{\bx{\final_L}} &
\fin \ar@{ >->}[d]|{\bx{\ttrue}}  \\  
\ar@{}[rd]|{PB} 
T(L) \ar[r]|{\bx{T(\final_L)}} & 
T(\fin) \\  
T(\ini) \ar@{ >->}[u]|{\bx{T(\init_L)}} \ar[r]|(.6){\bx{\fin_{T(\ini)}}} &
\fin \ar@{ >->}[u]|{\bx{\ffalse}} \\
} $ }
\medskip
\centerline{\xymatrix@C=4.0pc@R=1.8pc{
\ar@{}[dr]|{PB} 
  K \ar@{ >->}[d]|{\bx{\eta_K}} \ar[r]|{\bx{l}} &
  L \ar@{ >->}[d]|{\bx{\eta_L}} \\ 
\ar@{}[dr]|{PB} 
  T(K) \ar[r]|{\bx{T(l)}} &
  T(L) \\
  T(\ini) \ar@{ >->}[u]|{\bx{T(\init_K)}} \ar[r]|{\bx{\id_{T(\ini)}}} & 
  T(\ini) \ar@{ >->}[u]|{\bx{T(\init_L)}} \\
  } }
\end{minipage}

\begin{proof}[of Proposition~\ref{prop:locality}]
Let us recall the statement of the proposition, for the readers' convenience:

\begin{quote} 
Let $\rho = (l,r,t)$ be a local rule.
Then, with the notations as in diagram~(\ref{eq:agree-rew}), for each match $L \upmonoto{m} G$ the resulting rewrite step $G \To_{\rho,m} H$ is local.
\end{quote}

By Definition~\ref{def:embSpec} we have to show that if $\ol{t}:T_K\to T(K)$ 
is such that $\ol{t}\setminus \id_K:T_K\setminus K \to T(K)\setminus K$ is an iso, i.e.~the rule is local, then
arrow  $g\setminus l: D\setminus K \rto G\setminus L$ is an iso as well.
Consider the diagram in Figure~\ref{fig:sm}, where the left part depicts the first phase of an {\agree} rewriting step, together with several arrows to the \M-subobject classifier $T(1)$. The right part is obtained by pulling back 
(part of) the left part along $\mathit{false}:1 \to T(1)$, obtaining the depicted strict complements (see Definition~\ref{def:embSpec}). Now, in triangle $(\ddagger)$ arrow $\ol{t}\setminus \id_K$ is iso by hypothesis, and $T(l)\sm l$ is iso by Lemma~\ref{lemma:locality}. Therefore $l' \sm l$ is an iso as well. Furthermore the square
around $1$ is a pullback, because it is obtained by pulling back (along $\mathit{false}$) the pullback around $T(1)$, and therefore $g\sm l$ is an iso.

\begin{figure}[htb]
$$ \xymatrix@C=2pc@R=2pc{
L \ar@{ >->}[dd]|{\bx{m}}   \ar@/_3ex/[dddd]|{\bx{\eta_L}}  
&
& K \ar[ll]|{\bx{l}}  \ar@{ >->}[dd]|{\bx{n}} \ar@{ >->}@/_3ex/[dddd]|(.2){\bx{t}} \ar@{ >->}@/^4ex/[dddddd]|(.49){\bx{\eta_K}}
\\
& & & {G\sm L} \ar[dd]|(.33){\hole}|{\bx{\olm\sm id_L}}|(.65){\hole} \ar@{-->}[llld]|(.3)\hole|(.37)\hole|(.45)\hole|(.6){\bx{G\sm m}}  \ar@{..>}[rd]|\hole
& & {D\sm K} \ar[dd]|{\bx{n'\sm id_K}} \ar@{-->}[llld]|(.55){\bx{D\sm n}}  \ar@{..>}[ld]  \ar[ll]|{\bx{g \sm l}}
\\
G  \ar@{..>}[rd]|{\bx{\chi_m}} \ar[dd]|{\bx{\olm}}
& & D \ar[dd]|{\bx{n'}}  \ar@{..>}[ld]|{\bx{\chi_n}} \ar[ll]|(.13)\hole|{\bx{g}}
 & &  1  \ar@{=>}[llld]|(.45){\bx{\mathit{false}}}|(.59)\hole|(.68)\hole|(.75)\hole \\ 
& T(1) 
&  & {T(L) \sm L}  \ar@{..>}[ru] \ar@{-->}[llld]|(.26)\hole|(.37)\hole|(.42)\hole|(.53)\hole|(.62)\hole 
 & &   {T_K \sm K}   \ar[dd]|{\bx{\olt \sm id_K}} \ar@{-->}[llld]|(.7){\bx{T_K\sm t}}  \ar@{..>}[lu]  \ar[ll]|{\bx{l' \sm l}}\\ 
T(L) \ar@{..>}[ru]|{\bx{T(1_L)}}
& & T_K  \ar[ll]|{\bx{l' = \varphi(t,l)}}  \ar@{..>}[ul]|(.4){\bx{\chi_t}} \ar[dd]|{\bx{\olt}}
\\
& & & \ar@{}[uurr]|(.67){(\ddagger)}  & &  T(K) \sm K  \ar[uull]|{\bx{T(l)\sm l}}|(.77)\hole  \ar@{-->}[llld]|{\bx{T(K)\sm \eta_K}}  \ar@{..>}[uuul]|(.61)\hole|(.68)\hole
\\ 
& & T(K) \ar@{..>}[uuul]|{\bx{T(1_K)}} \ar[uull]|{\bx{T(l)}}  
}$$
\caption{Transformation of strict complements in {\agree}}
\label{fig:sm}
\end{figure}
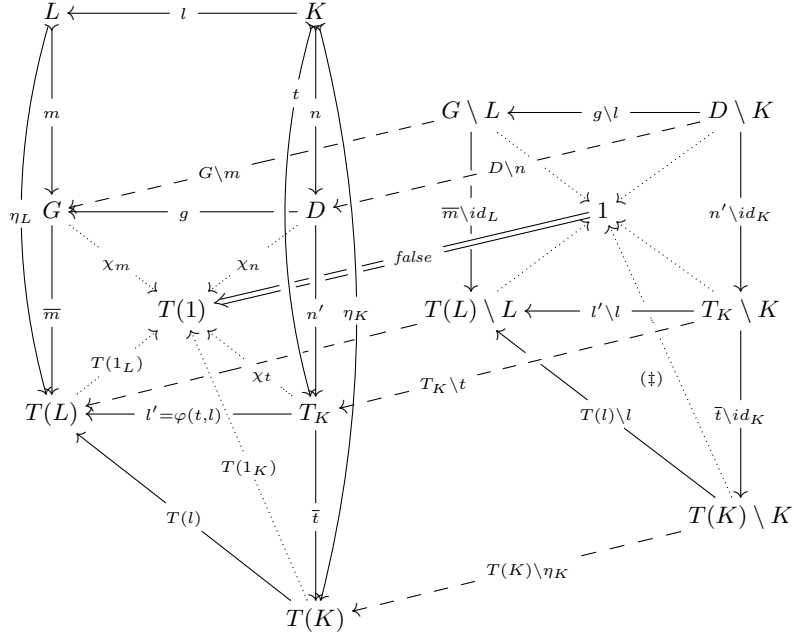
\end{proof}

\pagebreak
\begin{proof}[Uniqueness part of the proof of Theorem~\ref{theorem:psqpo}]
Let as redraw the right diagram of Figure~\ref{fig:FPBCasPB} for the reader's convenience, 
enriched with some additional information.
$$ \xymatrix@C=1pc@R=0.8pc{
& & & X \ar@{..>}@/_3ex/[ddl]|{\bx{v}} \ar@{..>}@/^3ex/[dddr]|{\bx{w}} \ar@{..>}[d]_z \ar@<0ex>@{}[ddl]|(.4){\ideq{6}} \ar@<0ex>@{}[dddr]|(.4){\ideq{7}}\\
& & & K'   \ar@/_2ex/[dlll]|{\bx{d}}   \ar@<1ex>@{}[dlll]|{\ideq{1}}    \ar[ddr]|{\bx{e}}  \ar[dl]|{\bx{h}}   
 \ar@<0ex>@{}[ddd]|{\ideq{2}} \\
L \ar@{ >->}[ddr]|{\bx{m}}     \ar@{ >->}[dddd]|{\bx{\eta_L}} 
& & K \ar[ll]|{\bx{l}}    \ar@{ >->}[dddd]|(.23){\bx{\eta_K}}|(.4){\hole}|(.53){\hole}    \ar@{ >->}[ddr]|(.35){\bx{n}}|(.63){\hole} \\
& & & & D'   \ar@{..>}[dl]|{{g}}   \ar@/_1ex/[dlll]|{\bx{f}}   \ar@/^3ex/[dddll]|{\phantom{\big(}\varphi(e,h)\phantom{\big(}}    \ar@/^13ex/[dddllll]|{\phantom{\big(}\varphi(e,d)\phantom{\big(}}    
\ar@<1ex>@{}[dlll]|{\ideq{3}}   \ar@<1ex>@{}[dddll]|(.4){\ideq{5}}\\
& G  \ar[ddl]|{\bx{\overline{m}}}  \ar@{}[ddr]|{\ideq{4}}   
& & D  \ar[ddl]|(.35){\phantom{\big(}n'\phantom{\big(}}  \ar[ll]|(.3){\bx{a}}\\
& \\
T(L)  
& & T(K) \ar[ll]|{\bx{T(l)}} 
}$$
We have to prove that the arrow $D' \uprto{{g}} D$, that was shown to exists in the first part of the proof, is the only arrow that satisfies  $n \circ h = {g} \circ e$  and $a \circ {g} = f$. 
Suppose indeed that  $D' \uprto{\hat{g}} D$ is another arrow such that  $n \circ h \eqid{2} \hat{g} \circ e$  and $a \circ \hat{g} \eqid{3} f$. Since \ideq{4} is a pullback, in order to show that $\hat{g} = g$ it is sufficient 
to show that $n' \circ \hat{g} \eqid{5} \varphi(e,h)$, because commutativity of $\ideq{3}$ and $\ideq{5}$ uniquely determines a mediating arrow $D' \to D$. To show $n' \circ \hat{g} \eqid{5} \varphi(e,h)$, recall that 
by the properties of the \M{}-partial map classifier $(T,\eta)$ there is a unique arrow $D' \uprto{\varphi(e,h)} T(K)$ such that $\eta_K \circ h = \varphi(e,h) \circ e$ and the square is a pullback. Therefore it is sufficient to show that $K' \uprto{h} K \upmonoto{\eta_K} T(K) \uplto{n' \circ \hat{g}} D' \uplto{e} K'$ is a pullback. 

First, it commutes, as $n' \circ \hat{g} \circ e \eqid{2} n' \circ n\circ h = \eta_K \circ h$. Next, let $\langle X, X \uprto{v} K, X \uprto{w} D'\rangle$ be such that $\eta_K \circ v = n' \circ \hat{g} \circ w$.  We have to show that there is a unique $X \uprto{z} K'$ such that $v \eqid{6} h \circ z$ and $w \eqid{7}  e \circ z$.  For \emph{existence}, an arrow  $X \uprto{z} K'$ is determined by exploiting the pullback   $\eta_L \circ d = \varphi(e,d) \circ e$ (it is a pullback again by the properties of $(T,\eta)$).

In fact we have $\eta_L \circ (l \circ v) = T(l) \circ \eta_K \circ v = T(l) \circ \varphi(e,h) \circ w = \varphi(e,d) \circ w$. Thus there is an arrow  $X \uprto{z} K'$ such that both \ideq{7} and $l \circ v \eqid{8} d \circ z$ hold. It remains to show \ideq{6}, i.e.\ that $h\circ z = v$. By exploiting pullback  $\eta_L \circ l = T(l) \circ \eta_K$, it is sufficient to show that (i) $l\circ h\circ z = l \circ v$ and (ii) $\eta_K \circ h\circ z = \eta_K \circ v$. In fact, we have (i) $l \circ h\circ z \eqid{1} d \circ z \eqid{8}  l \circ v$, and (ii) $\eta_K \circ h\circ z =  n' \circ n  \circ h \circ z   \eqid{2} n' \circ \hat{g}  \circ e \circ z   \eqid{7}   n' \circ \hat{g}  \circ w = 
\eta_K \circ v$.  Finally, the \emph{uniqueness} of $X \uprto{z} K'$ follows by the observation that commutativity of  \ideq{6} and \ideq{7} uniquely determines a mediating morphism to $K'$ regarded as pullback object of $K' \uprto{e} D' \uprto{\varphi(e,h)} T(K) \uplto{\eta_K} K \uplto{h} K'$. 

\end{proof}

\comment{
\begin{proof}
By the decomposition property we have that $K \upmonoto{n} D  \uprto{a} G$ is a pullback complement of  $K \uprto{l} L  \upmonoto{m} G$, and $n \in \M{}$ by stability. We have to show that the pullback complement is final.

Let $G \uplto{m} L \uplto{d} K' \uprto{e} D' \uprto{f} G$ be a pullback and  $K' \uprto{h} K$ be 
an arrow such that  $l \circ h \eqid{1} d$, as shown on the right of Figure~\ref{fig:FPBCasPB}. Note that $K' \upmonoto{e} D'$ is in \M{} by stability. Furthermore,
by the properties of the \M{}-partial map classifier $T$, \ideq{A} there is a unique arrow $D' \uprto{\varphi(e,h)} T(K)$ such that $\eta_K \circ h = \varphi(e,h) \circ e$ and the square is a pullback, and \ideq{B} there is a unique arrow $D' \uprto{\varphi(e,d)} T(L)$ such that $\eta_L \circ d = \varphi(e,d) \circ e$ and the square is a pullback.
Let us show that there is a unique arrow $D' \uprto{g} D$ such that  $n \circ h \eqid{2} g \circ e$  and $a \circ g \eqid{3} f$.

\noindent
\textbf{[Existence]}
We will show below that $ \olm \circ f = T(l) \circ \varphi(e,h)$, hence by the universal property of the pullback $\ideq{4}$ there is a  unique arrow $D' \uprto{g} D$ such that \ideq{3} and $n' \circ g \eqid{5} \varphi(e,h)$ hold. It remains to show that $n \circ h \eqid{2} g \circ e$: by exploiting again pullback $\ideq{4}$, it is sufficient to show that (i) $a \circ n \circ h = a \circ g \circ e$ and (ii) $n' \circ n \circ h = n' \circ g \circ e$. In fact we have, by simple diagram chasing:
\begin{description}
\item (i) $a \circ n \circ h =  m \circ l \circ h = m \circ d = f \circ e = a \circ g \circ e$
\item (ii) $n' \circ n \circ h = \eta_K \circ h =  \varphi(e,h) \circ e =    n' \circ g \circ e$
\end{description}

\vspace{-.5cm}
\begin{figure}
\hfill 
$ \xymatrix@C=1pc@R=0.8pc{
& & &\\
L \ar@{ >->}[ddr]|{\bx{m}}     \ar@{ >->}[dddd]|{\bx{\eta_L}} 
& & K \ar[ll]|{\bx{l}}    \ar@{ >->}[dddd]|(.35){\bx{\eta_K}}|(.5){\hole}    \ar@{ >->}[ddr]|(.35){\bx{n}} \\
& & &  \\
& G  \ar[ddl]|{\bx{\overline{m}}}  
& & D  \ar[ddl]|(.35){\phantom{\big(}n'\phantom{\big(}}  \ar[ll]|(.3){\bx{a}}\\
& \\
T(L)  
& & T(K) \ar[ll]|{\bx{T(l)}} 
}$
\hfill
$ \xymatrix@C=1pc@R=0.8pc{
& & & X \ar@{..>}@/_3ex/[ddl]|{\bx{v}} \ar@{..>}@/^3ex/[dddr]|{\bx{w}} \ar@{..>}[d]_z \ar@<0ex>@{}[ddl]|(.4){\ideq{6}} \ar@<0ex>@{}[dddr]|(.4){\ideq{7}}\\
& & & K'   \ar@/_2ex/[dlll]|{\bx{d}}   \ar@<1ex>@{}[dlll]|{\ideq{1}}    \ar[ddr]|{\bx{e}}  \ar[dl]|{\bx{h}}   
 \ar@<0ex>@{}[ddd]|{\ideq{2}} \\
L \ar@{ >->}[ddr]|{\bx{m}}     \ar@{ >->}[dddd]|{\bx{\eta_L}} 
& & K \ar[ll]|{\bx{l}}    \ar@{ >->}[dddd]|(.23){\bx{\eta_K}}|(.4){\hole}|(.53){\hole}    \ar@{ >->}[ddr]|(.35){\bx{n}}|(.63){\hole} \\
& & & & D'   \ar@{..>}[dl]|{{g}}   \ar@/_1ex/[dlll]|{\bx{f}}   \ar@/^3ex/[dddll]|{\phantom{\big(}\varphi(e,h)\phantom{\big(}}    \ar@/^13ex/[dddllll]|{\phantom{\big(}\varphi(e,d)\phantom{\big(}}    
\ar@<1ex>@{}[dlll]|{\ideq{3}}   \ar@<1ex>@{}[dddll]|(.4){\ideq{5}}\\
& G  \ar[ddl]|{\bx{\overline{m}}}  \ar@{}[ddr]|{\ideq{4}}   
& & D  \ar[ddl]|(.35){\phantom{\big(}n'\phantom{\big(}}  \ar[ll]|(.3){\bx{a}}\\
& \\
T(L)  
& & T(K) \ar[ll]|{\bx{T(l)}} 
}$\hfill\mbox{}
\caption{Constructing the final pullback complement of $m \circ l$ with a pullback.}
\label{fig:FPBCasPB}
\end{figure}
\vspace{-.5cm}

We still have to show that $ \olm \circ f = T(l) \circ \varphi(e,h)$. This follows by comparing the following two diagrams, where all squares are pullbacks, either by the statements of Section~\ref{sec:preliminaries} or (the last to the right) by assumption.
Clearly, also the composite squares are pullbacks, but then the bottom arrows must both be equal to $\varphi(e,d)$, as in Equation~(\ref{pb:pmc}). Therefore we conclude that  $ \olm \circ f = 
\varphi(e,d) = T(l) \circ \varphi(e,h)$. 
$$
 \xymatrix@C=3.5pc@R=3pc{
L \ar[d]|{\bx{\eta_L}} \ar@{}[rd]|{PB~(\ref{pb:eta})} & 
K \ar[l]|{\bx{l}} \ar[d]|{\bx{\eta_K}} \ar@{}[rd]|{PB~(\ref{pb:pmc})}& 
K' \ar[l]|{\bx{h}} \ar@{ >->}[d]|{\bx{e}} \ar@/_4ex/[ll]|{\bx{d}} 
\\
T(L) & 
T(K) \ar[l]|{\bx{T(l)}}& 
D' \ar[l]|(.5){\bx{\varphi(e,h)}}
}
\quad\quad
\xymatrix@C=3.5pc@R=3pc{
L \ar[d]|{\bx{\eta_L}} \ar@{}[rd]|{PB~(\ref{pb:olm})} & 
L \ar[l]|{\bx{id_L}} \ar[d]|{\bx{m}}& 
K' \ar[l]|{\bx{l \circ h}} \ar@{ >->}[d]|{\bx{e}} \ar@/_4ex/[ll]|{\bx{d}} 
\\
T(L) & 
G \ar[l]|{\bx{\olm}}& 
D' \ar[l]|(.5){\bx{f}}
}
$$

\noindent
\textbf{[Uniqueness]} Let $D' \uprto{\hat{g}} D$ be an arrow such that  $n \circ h \eqid{2} \hat{g} \circ e$  and $a \circ \hat{g} \eqid{3} f$. Since \ideq{4} is a pullback, in order to show that $\hat{g} = g$ it is sufficient 
to show that $n' \circ \hat{g} \eqid{5} \varphi(e,h)$, because commutativity of $\ideq{3}$ and $\ideq{5}$ uniquely determines a mediating arrow $D' \to D$. To show $n' \circ \hat{g} \eqid{5} \varphi(e,h)$, by the property of arrow $\varphi(e,h)$ (see \ideq{A} above) it is sufficient to show that $K' \uprto{h} K \upmonoto{\eta_K} T(K) \uplto{n' \circ \hat{g}} D' \uplto{e} K'$ is a pullback. First, it commutes, as $n' \circ \hat{g} \circ e \eqid{2} n' \circ n\circ h = \eta_K \circ h$. Next, let $\langle X, X \uprto{v} K, X \uprto{w} D'\rangle$ be such that $\eta_K \circ v = n' \circ \hat{g} \circ w$.  We have to show that there is a unique $X \uprto{z} K'$ such that $v \eqid{6} h \circ z$ and $w \eqid{7}  e \circ z$.  For \emph{existence}, an arrow  $X \uprto{z} K'$ is determined by exploiting the pullback   $\eta_L \circ d = \varphi(e,d) \circ e$  (see \ideq{B} above). In fact we have $\eta_L \circ (l \circ v) = T(l) \circ \eta_K \circ v = T(l) \circ \varphi(e,h) \circ w = \varphi(e,d) \circ w$. Thus there is an arrow  $X \uprto{z} K'$ such that \ideq{7} and $l \circ v \eqid{8} d \circ z$ hold. It remains to show \ideq{6}, i.e.\ that $h\circ z = v$. By exploiting pullback  $\eta_L \circ l = T(l) \circ \eta_K$, it is sufficient to show that (i) $l\circ h\circ z = l \circ v$ and (ii) $\eta_K \circ h\circ z = \eta_K \circ v$. In fact, we have (i) $l \circ h\circ z \eqid{1} d \circ z \eqid{8}  l \circ v$, and (ii) $\eta_K \circ h\circ z =  n' \circ n  \circ h \circ z   \eqid{2} n' \circ \hat{g}  \circ e \circ z   \eqid{7}   n' \circ \hat{g}  \circ w = 
\eta_K \circ v$.  Finally, the \emph{uniqueness} of $X \uprto{z} K'$ follows by the observation that commutativity of  \ideq{6} and \ideq{7} uniquely determines a mediating morphism to $K'$ regarded as pullback object of $K' \uprto{e} D' \uprto{\varphi(e,h)} T(K) \uplto{\eta_K} K \uplto{h} K'$. 
\end{proof}
}

\section{The final pullback complement theorem, revisited}
\label{app:abstract}

This Appendix is dedicated to a more abstract, equivalent presentation of the 
statement of 
Theorem~\ref{theorem:sqpo} and of its proof. 
By exploiting the characterization of the final pullback complement 
as an adjoint functor, we get a proof which hides 
some diagram chasing by using general properties of 
partial map classifiers and adjunctions. 
First we state a lemma about decomposing the arrow $\varphi(m,f)$,
then we recall the definitions of slice categories and pullback functors,
and finally we get a new point of view on Theorem~\ref{theorem:sqpo}. 

\begin{lemma}[Decompositions of $\varphi(m,f)$] 
\label{lemma:mf}
Let $\catC$ be a category with pullbacks 
and with an $\M$-partial map classifier $(T,\eta)$ 
for a stable system of monos $\M$. 
For each \M-partial map $(m,f):Z\parto Y$, with $m:X\monoto Z$, 
we have $T(f)\circ \olm = \varphi(m,f) $.
If in addition $(m,f)$ is the pullback of some $(n,g)$ 
with $n:Y\monoto W$ in \M{}, then 
$T(f)\circ \olm = \varphi(m,f) = \oln \circ g$.
\end{lemma}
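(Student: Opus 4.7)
The guiding idea is that both identities are instances of the characterization of $\varphi(m,f)$ by the pullback square~(\ref{pb:pmc}): if one exhibits a pullback of shape $X \uprto{f} Y$, $X \upmonoto{m} Z$, $Y \upmonoto{\eta_Y} T(Y)$, $Z \uprto{\alpha} T(Y)$, then by uniqueness $\alpha = \varphi(m,f)$. The task is therefore to construct two such pullbacks, with $\alpha = T(f)\circ \olm$ for the first claim and $\alpha = \oln\circ g$ for the second, each obtained as a horizontal composition of two pullbacks already available.

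For the first identity, I would place side by side (i) the pullback~(\ref{pb:olm}) defining $\olm$, whose top and bottom arrows are $\id_X$ and $\olm$, and (ii) the naturality square~(\ref{pb:eta}) instantiated at $f:X\to Y$, whose top and bottom arrows are $f$ and $T(f)$; the latter is a pullback because $\eta$ is cartesian. The composition property of pullbacks then yields a pullback with top $f$, left $m$, right $\eta_Y$ and bottom $T(f)\circ \olm$. This has the shape of~(\ref{pb:pmc}), so uniqueness forces $T(f)\circ \olm = \varphi(m,f)$.

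For the second identity, the plan is to compose horizontally (i) the pullback $(m,f) = \mathrm{pb}(n,g)$ given by hypothesis, with top $f$ and bottom $g$, with (ii) the pullback~(\ref{pb:olm}) instantiated at $n:Y\monoto W$, whose top and bottom arrows are $\id_Y$ and $\oln$. The composed square has top $f$, left $m$, right $\eta_Y$ and bottom $\oln\circ g$; again it is a pullback of shape~(\ref{pb:pmc}), so uniqueness gives $\oln\circ g = \varphi(m,f)$. Chaining the two equalities yields the full statement $T(f)\circ \olm = \varphi(m,f) = \oln\circ g$.

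There is no real obstacle here beyond bookkeeping: the argument uses only the cartesianness of $\eta$ (so that~(\ref{pb:eta}) is a pullback), the defining pullback~(\ref{pb:olm}) of $\olm$, the universal property~(\ref{pb:pmc}) of $\varphi$, and the pullback composition property recalled at the start of Section~\ref{sec:preliminaries}. The only subtle point is to check, for the second claim, that the intermediate object shared by the two composed pullbacks is indeed $Y$, with the arrow $n$ on the right of the hypothesis pullback matching the arrow $n$ on the left of~(\ref{pb:olm}), so that the bottom of the composite is exactly $\oln\circ g$.
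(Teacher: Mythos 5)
Your proof is correct and follows essentially the same route as the paper's: both identities are obtained by horizontally composing the defining pullback~(\ref{pb:olm}) with, respectively, the cartesian naturality square~(\ref{pb:eta}) at $f$ and the hypothesis pullback of $(n,g)$, and then invoking the uniqueness of $\varphi(m,f)$ from the classifier square~(\ref{pb:pmc}). No gaps; the bookkeeping you flag (the shared middle object and the matching of $n$) is exactly what the paper's two composed diagrams record.
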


\begin{proof}
For the first point, the left diagram below is composed of two pullbacks 
of shape~(\ref{pb:olm}) and~(\ref{pb:eta}), respectively, 
therefore it is a pullback. Since it has shape~(\ref{pb:pmc}), we conclude
that $ T(f)\circ \olm  = \varphi(m,f)$. 

$$\xymatrix@C=3pc{
\ar@{}[rd]|{PB~(\ref{pb:olm})} X \ar@{ >->}[d]|{\bx{m}} \ar[r]|{\bx{\id_X}} & 
  \ar@{}[rd]|{PB~(\ref{pb:eta})} X \ar@{ >->}[d]|{\bx{\eta_X}} \ar[r]|{\bx{f}} & 
  Y \ar@{ >->}[d]|{\bx{\eta_Y}} \\ 
Z \ar[r]|{\bx{\olm}}  \ar@{-->}@/_3ex/[rr]|{\bx{\varphi(m,f)}}   & 
  T(X) \ar[r]|{\bx{T(f)}} & 
  T(Y) } \qquad
\xymatrix@C=3pc{
\ar@{}[rd]|{PB} X \ar@{ >->}[d]|{\bx{m}} \ar[r]|{\bx{f}} & 
  \ar@{}[rd]|{PB~(\ref{pb:olm})} Y \ar@{ >->}[d]|{\bx{n}} \ar[r]|{\bx{\id_X}} & 
  Y \ar@{ >->}[d]|{\bx{\eta_Y}} \\ 
Z \ar[r]|{\bx{g}} \ar@{-->}@/_3ex/[rr]|{\bx{\varphi(m,f)}} & 
  W \ar[r]|{\bx{\oln}} & 
  T(Y) } 
$$ 
For the second point, similarly, the right diagram above is the composition 
of a pullback of shape~(\ref{pb:olm}) and of the left square that is pullback 
by assumption, 
thus it is a pullback. Since  it has shape~(\ref{pb:pmc}), we can  conclude
that $ \oln \circ g =  \varphi(m,f)$.
\end{proof} 

For each object $X$ in a category $\catC$,
the \emph{slice category} over $X$ is denoted $\catC\slice X$:
its objects are the arrows $f:Y\to X$ in $\catC$
and an arrow $g:f_1\to f_2$ in $\catC\slice X$, 
with $f_1:Y_1\to X$ and $f_2:Y_2\to X$ in $\catC$, 
is an arrow $g:Y_1\to Y_2$ in $\catC$ such that $f_2\circ g= f_1$. 
For each endofunctor $F:\catC\to\catC$ and each object $X$ in $\catC$, 
let us still denote by $F$ the functor $F:\catC\slice X \to \catC\slice F(X)$
which maps each object $f$ of $\catC\slice X$ to $F(f)$ 
and each arrow $g:f_1\to f_2$ of $\catC\slice X$ to $F(g)$. 

For each arrow $m:X\to Z$ in a category $\catC$ with pullbacks, 
the \emph{pullback functor} associated with $m$ 
is denoted $\pb_m:\catC\slice Z \to \catC\slice X$; 
on objects, it maps each $h$ to $f=\pb_m(h)$ 
such that the square below on the left is a pullback square;
on arrows, using the decomposition property of pullbacks,
it maps each $k:h_1\to h_2$ to the unique $g=\pb_m(k):f_1\to f_2$,
where $f_1=\pb_m(h_1)$ and $f_2=\pb_m(h_2)$, 
such that $(g,n_1)$ is a pullback of $(k,n_2)$ (below on the right).
In fact, ``the'' pullback functor is defined only up to isomorphism, 
but this will not raise any problem. 
$$
\xymatrix@C=6pc@R=3pc{ 
\ar@{}[rd]|{PB}
Y \ar[r]|{\bx{f}} \ar[d]|{\bx{n}} & X \ar[d]|{\bx{m}} \\
W \ar[r]|{\bx{h}} & Z \\ 
} \qquad  \qquad 
\xymatrix@C=1.5pc@R=1pc{ 
Y_1 \ar[rrr]|(.5){\bx{f_1}} \ar[dd]|{\bx{n_1}} \ar[rd]|{\bx{g}} &&& X \ar[dd]|(.5){\bx{m}} \\
& Y_2 \ar[rru]|{\bx{f_2}} \ar[dd]|(.35){\bx{n_2}} &&  \\
W_1 \ar[rrr]|(.4)\hole|(.7){\bx{h_1}} \ar[rd]|{\bx{k}} &&& Z \\ 
& W_2 \ar[rru]|{\bx{h_2}} &&  \\ 
} 
$$ 

The next result rephrases part of Theorem 4.4 of~\cite{DT87} (see also~\cite{CorradiniHHK06}).
\begin{theorem}[final pullback complements as right adjoints]
\label{thm:DT}
Let $\catC$ be a category with pullbacks and $m:X\to Z$ be an arrow of $\catC$. Then the following are equivalent.
\begin{enumerate}
\item  The pullback functor $\pb_m:\catC\slice Z \to \catC\slice X$
has a right adjoint $G$
and the counit of the adjunction is a natural isomorphism.

\item  Arrow $m$ \emph{has final pullback complements}, i.e., for each $f: Y \to X$ there is a pair of composable 
arrows $Y \uprto{n} W \uprto{h} Z$ which are a final  pullback 
complement of $Y \uprto{f} X \uprto{m} Z$.
\end{enumerate}
In addition, if the previous points hold then  $h$ and $G(f)$ coincide up to isomorphism.
\end{theorem}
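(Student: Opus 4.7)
The plan is to read both conditions as instances of the same universal property expressed in terms of the pullback functor $\pb_m$, and then to derive each direction by unpacking that common core. The key observation is that, given a chosen pullback square with legs $f':Y'\to X$, $n':Y'\to W$ for $(m,h)$, a morphism $g:\pb_m(h)\to f$ in $\catC\slice X$ (i.e.\ $f\circ g = f'$) corresponds by routine pullback calculus to ``candidate'' data $(k,n_G,h_G)$ satisfying $h_G\circ k = h$ and $k\circ n' = n_G\circ g$. Both formulations of the theorem assert that, as $g$ varies, such data are uniquely determined; the first says this via an adjunction bijection, the second via the final pullback complement property.

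For the direction $(2)\Rightarrow(1)$, I would define $G(f)=h_G$ by picking a final pullback complement $(n_G,h_G)$ of $(f,m)$. Functoriality of $G$ on an arrow $k:f_1\to f_2$ of $\catC\slice X$ is obtained by applying the FPBC property of $(n_2,h_{G,2})$ to the pullback square computing $\pb_m(h_{G,1})=f_1$ and the arrow $k$, which yields a unique $G(k):W_{G,1}\to W_{G,2}$ with $h_{G,2}\circ G(k)=h_{G,1}$ and $G(k)\circ n_1 = n_2\circ k$; uniqueness gives functoriality. The counit $\epsilon_f:\pb_m(G(f))\to f$ is the iso identifying $\pb_m(h_G)$ with $f$, which is available because the FPBC square is a pullback (condition 1). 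The adjunction bijection $\mathit{Hom}_{\catC\slice X}(\pb_m(h),f)\cong \mathit{Hom}_{\catC\slice Z}(h,G(f))$ is then exactly the content of FPBC condition 2 applied to the pullback of $(m,h)$, with naturality in $h$ and $f$ coming from the uniqueness clause.

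For the direction $(1)\Rightarrow(2)$, I would set $(n_G,h_G)=(\epsilon_f',G(f))$, where $\epsilon_f'$ denotes the ``other leg'' of the pullback square produced by $\epsilon_f:\pb_m(G(f))\xrightarrow{\sim}f$. Condition 1 of FPBC is then the counit being iso; for condition 2, given a pullback square $(e,d)$ of $(f'',m)$ and an arrow $k:Y''\to Y$ with $f\circ k=d$, one reads $k$ as a morphism $\pb_m(f'')\to f$ in $\catC\slice X$ and transports it across the adjunction to get a unique $g:f''\to G(f)$ in $\catC\slice Z$. The equation $h_G\circ g = f''$ is membership in the slice, and the compatibility $g\circ e = n_G\circ k$ is the defining equation of $\pb_m(g)$ composed with the identification $k=\pb_m(g)$ coming from the adjunction bijection.

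The final ``up to isomorphism'' clause falls out either from uniqueness of right adjoints, or equivalently from the fact that FPBCs are characterised by a universal property (they are terminal objects in the evident comma category of pullback complements of $(f,m)$). The main obstacle I anticipate is bookkeeping: the pullback functor $\pb_m$ is defined only up to canonical isomorphism, so $\epsilon$ need not be literally the identity, and one has to be careful that the ``compatibility'' equation $g\circ e=n_G\circ k$ transports cleanly across these canonical isos when moving between the two formulations. Once that is handled by invoking the decomposition property of pullbacks to absorb the ambiguity, both directions reduce to routine uniqueness arguments.
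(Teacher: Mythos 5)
Your argument is essentially the standard direct verification of this equivalence, and its overall structure is sound; note, however, that the paper itself gives no proof of Theorem~\ref{thm:DT} at all --- it is imported from Dyckhoff--Tholen \cite{DT87} (Theorem~4.4, see also \cite{CorradiniHHK06}) --- so what you have written is the self-contained argument the paper chose to outsource. Two spots in your write-up need tightening. In $(2)\Rightarrow(1)$, the hom-set bijection $\Hom_{\catC\slice X}(\pb_m(h),f)\cong\Hom_{\catC\slice Z}(h,G(f))$ is not ``exactly'' FPBC condition~2: that condition supplies the forward assignment $g\mapsto u$ and its injectivity, while the inverse (recovering $g$ from $u:h\to G(f)$) uses condition~1, namely that the FPBC square is itself a pullback; moreover, to conclude an adjunction with the stated counit you should check that the bijection is implemented by $u\mapsto \varepsilon_f\circ\pb_m(u)$, a short chase again using that the FPBC square is a pullback --- this is what makes ``naturality from the uniqueness clause'' precise. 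In $(1)\Rightarrow(2)$, uniqueness of the mediating arrow $g$ is not just uniqueness of the adjoint transpose: given an arbitrary $g$ with $h_G\circ g=f''$ and $g\circ e=n_G\circ k$, you must show it transposes back to $k$ (corrected by the comparison iso between the given pullback of $(m,f'')$ and the chosen one), i.e.\ $\varepsilon_f\circ\pb_m(g)$ agrees with $k$, which follows because the counit square is a pullback; for the same reason the phrase ``the identification $k=\pb_m(g)$'' should be $\varepsilon_f\circ\pb_m(g)=k$ up to that canonical iso. Both repairs are routine diagram chases, and your closing remark about the up-to-isomorphism bookkeeping shows you know where they live, so I regard the proposal as correct in outline rather than as having a genuine gap.
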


\hide{  
The coherence of Definition~\ref{def:fpbc} with the following more abstract 
Definition~\ref{def:fpbc-abstract} is explained in \cite{DT87,CorradiniHHK06}. 

%
%

\begin{definition}[final pullback complement functor]
\label{def:fpbc-abstract}
Let $\catC$ be a category with pullbacks.
For each $m:X\to Z$ in $\catC$, if the pullback functor $\pb_m$
has a right adjoint $G$
and if the counit of this adjunction is a natural isomorphism,
then $G$ is called
the \emph{final pullback complement functor} associated with $m$.
\end{definition}

Then by \cite[Theorem 4.4]{DT87}
there is a final pullback complement functor associated with $m$
if and only if
for every $f:Y\to X$ the pair $(f,m)$ has a final pullback complement
in the sense of Definition~\ref{def:fpbc}
}


Let $\catC$ be a category with pullbacks 
and let $\M$ be a stable system of monos of $\catC$. 
Then the composition of consecutive $\M$-partial maps is defined 
in the usual way, using a pullback in $\catC$. 
This yields the category $\catP$ of $\M$-partial maps over $\catC$   
and the inclusion functor $I:\catC\to\catP$,  
which maps each object $X$ to $X$ and 
each arrow $f:X\to Y$ to $(\id_X,f):X\parto Y$. 
According to \cite[Sec.2.1]{CL2}, $\catC$ has an $\M$-partial map classifier
if and only if the functor $I$ has a right adjoint $E:\catP\to\catC$, 
and then the $\M$-partial map classifier $(T,\eta)$ is made of the endofunctor 
$T=E\circ I$ on $\catC$ and of the unit of 
the adjunction, $\eta: Id_{\catC} \dotarrow T$.
Thus, functor $T$  is defined as $T(X)=E\circ I(X) = E(X)$ for each object $X$ and 
$T(f)=E(\id_X,f)$ for each arrow $f:X\to Y$. 
Now, exploiting Theorem~\ref{thm:DT} we can state and prove Theorem~\ref{theorem:sqpo} 
 in a more abstract framework, as follows. 
\begin{theorem}[building final pullback complements (revisited)]
\label{theorem:sqpo-abstract}
Let $\catC$ be a category with pullbacks 
and with an $\M$-partial map classifier $(T,\eta)$ 
for a stable system of monos $\M$. 
Then for each mono $m:L\monoto G$ in $\M$ the functor  
$\fpbc_m = \pb_{\olm}\circ T :\catC\slice L \to \catC\slice G$ is the right 
adjoint to functor $\pb_m :\catC\slice G \to \catC\slice L$.
In addition, the counit of the adjunction is a natural 
isomorphism. 
\end{theorem}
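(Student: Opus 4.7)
The plan is to construct the adjunction $\pb_m \dashv \fpbc_m$ directly via a natural Hom-set bijection, and then to show the resulting counit is an isomorphism by a pullback-pasting argument relying on the cartesian naturality of $\eta$. The essential tools are the universal property of $(T,\eta)$ (arrows $X\to T(Y)$ classify $\M$-partial maps $X\parto Y$) together with Lemma~\ref{lemma:mf}, which rewrites composites involving $\olm$ in terms of $\varphi$.

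For the Hom-set bijection I would fix $f:X\to G$ in $\catC\slice G$ and $h:Y\to L$ in $\catC\slice L$, form the pullback $L\uplto{s}P\uprto{t}X$ of $(m,f)$ so that $\pb_m(f)=s$ (with $t\in\M$ by stability, and $\olm\circ f=\varphi(t,s)$ by Lemma~\ref{lemma:mf}), and the pullback $G\uplto{q_2}Z\uprto{q_1}T(Y)$ of $(\olm,T(h))$ so that $\fpbc_m(h)=q_2$. A morphism $f\to\fpbc_m(h)$ in $\catC\slice G$ is a map $u:X\to Z$ over $G$, equivalently by the pullback universal property a map $u':X\to T(Y)$ satisfying $T(h)\circ u'=\olm\circ f$. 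Translating $u'$ into its classified partial map $(\sigma,\tau):X\parto Y$, functoriality of $T$ together with Lemma~\ref{lemma:mf} gives $T(h)\circ\varphi(\sigma,\tau)=\varphi(\sigma,h\circ\tau)$, so the compatibility condition reduces to $\varphi(\sigma,h\circ\tau)=\varphi(t,s)$, forcing $(\sigma,h\circ\tau)\sim(t,s)$ as partial maps. Choosing the representative with $\sigma=t$ yields a map $\tau:P\to Y$ with $h\circ\tau=s$, i.e.\ a morphism $\pb_m(f)\to h$ in $\catC\slice L$. Conversely, any such $\tau$ gives back a map $X\to Z$ over $G$ via $\varphi(t,\tau)$ and the pullback universal property; these constructions are mutually inverse, and naturality in $f$ and $h$ follows from the universal properties involved.

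To conclude, I would verify that the counit $\eps_h:\pb_m(\fpbc_m(h))\to h$ is an isomorphism by pasting the pullback of $m$ along $q_2$ on the left with the pullback defining $Z$ on the right. The outer rectangle then has bottom composite $\olm\circ m$, which equals $\eta_L$ by diagram~(\ref{pb:olm}), so it coincides with the cartesian naturality square~(\ref{pb:eta}) of $\eta$ at $h$; that square is a pullback with $Y$ as pullback object and $h$ as its left-hand leg. Decomposition of pullbacks then identifies $\pb_m(\fpbc_m(h))$ with $h$ up to canonical iso, naturally in $h$.

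The main obstacle will be bookkeeping rather than any conceptual subtlety: one must carefully handle partial maps up to their defining equivalence so that the universal-property translations yield genuine equalities in the appropriate slice category, and check naturality of the bijection in both variables. Once this is in place, Theorem~\ref{thm:DT} immediately yields final pullback complements along $m$ and recovers Theorem~\ref{theorem:sqpo} as a direct consequence of the present abstract statement.
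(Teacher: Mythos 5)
Your proof is correct, but it is organised differently from the paper's. The paper proves the adjunction by exhibiting the unit and counit directly: the counit comes from $\pb_m\circ\pb_{\olm}=\pb_{\eta_L}$ together with cartesianness of $\eta$ (so $\pb_{\eta_L}\circ T\cong\Id_{\catC\slice L}$), and the unit is the mediating arrow $\unit_g:D\to D'$ into the pullback $\pb_{\olm}(T(l))$ determined by $g$ and $\oln$, using Lemma~\ref{lemma:mf} to see that $T(l)\circ\oln=\olm\circ g$; the triangle identities and naturality are left as ``can be checked''. You instead build the hom-set bijection $\Hom_{\catC\slice L}(\pb_m(f),h)\cong\Hom_{\catC\slice G}(f,\fpbc_m(h))$ outright, by reading an arrow $X\to T(Y)$ compatible over $T(L)$ as its classified partial map $(\sigma,\tau):X\parto Y$ and using Lemma~\ref{lemma:mf} twice ($T(h)\circ\varphi(\sigma,\tau)=\varphi(\sigma,h\circ\tau)$ and $\varphi(t,s)=\olm\circ f$) to turn the compatibility condition into an equivalence of partial maps; note that your converse direction $\tau\mapsto\varphi(t,\tau)$ is the paper's unit in disguise (for $\tau=\id$ it is $\olt$, the analogue of the paper's $\oln$). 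Your counit argument is essentially the paper's, rephrased as pullback pasting of the square for $\pb_m$ with the square defining $Z$ and comparison with the cartesian naturality square of $\eta$ at $h$ over $\eta_L=\olm\circ m$. The one point worth making explicit is that the counit produced by your bijection really is that canonical comparison iso: this is immediate because it is the image of the identity, and it commutes with the legs to $L$ and (via $\eta_Y$) to $T(Y)$, so uniqueness of mediating arrows between the two pullbacks identifies it. What your route buys is a single universal-property computation in place of verifying the triangle identities; what it costs is the bookkeeping you already flag, namely handling partial maps up to equivalence and checking naturality in both variables, a burden comparable to the verifications the paper leaves implicit.
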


\begin{proof}
Let us sketch this proof by describing the unit $\unit$ and counit $\counit$
of the adjunction $\pb_m \dashv \fpbc_m$.
For the counit, since $\olm\circ m = \eta_L$ we have 
$\pb_m\circ\pb_{\olm} = \pb_{\eta_L} $,
and since the natural transformation $\eta$ is cartesian 
we have $\pb_{\eta_L}\circ T \cong \Id_{\catC\slice L} $. 
Then the counit $ \counit:  \pb_m \circ \fpbc_m \To \Id_{\catC\slice L} $
is the resulting natural isomorphism. 
For the unit, let $g:D\to G$ be an object in $\catC\slice G$ and  
let $l=\pb_m(g): K\to L$ in $\catC\slice L$ (see the diagrams below). 
Let $n:K\monoto D $ be the fourth arrow in this pullback,
then $n\in\M$ by stability and 
by Lemma~\ref{lemma:mf} we have $T(l)\circ \oln = \olm \circ g$. 
Let $g'= \pb_{\olm}\circ T(l):D'\to G $ and 
let $q:D'\to T(K)$ be the fourth arrow in this pullback. 
By definition of pullback, there is a unique arrow $\unit_g:D\to D'$ 
such that $g'\circ \unit_g=g $ and $ q\circ \unit_g=\oln$. 
It follows that $\unit_g:g\to g'$ is an arrow in $\catC\slice G$. 
Moreover, let $n'=\unit_g\circ n: K\to D'$,
then $q\circ n' = q\circ \unit_g\circ n = \oln\circ n = \eta_K$.
Since $\eta$ is cartesian, 
the decomposition property of pullbacks implies that $(n',l)$ 
is the pullback of $(m,g')$, so that $n'$ is in $\M$ and $q=\ol{n'}$.
Then it can be checked that the $\unit_g$ arrows defines 
a natural transformation
$\unit: \Id_{\catC\slice G} \To  \fpbc_m \circ \pb_m$,
which is the unit of the adjunction. 
%
$$ \xymatrix@C=4pc{
\ar@{}[rd]|{PB} L \ar@{ >->}[d]_{m} & 
  K \ar[l]|{\bx{l}} \ar@{ >->}[d]^{n} \\
\ar@{}[rd]|{=} G \ar[d]_{\olm} & 
  D \ar[l]|{\bx{g}} \ar[d]^{\oln} \\ 
T(L) & 
  T(K) \ar[l]|{\bx{T(l)}} \\ 
} \qquad \qquad 
\xymatrix@C=4pc{
\ar@{}[rd]|{PB} L \ar@{ >->}[d]_{m} & 
  K \ar[l]|{\bx{l}} \ar@{ >->}[d]^{n'} \\
\ar@{}[rd]|{PB} G \ar[d]_{\olm} & 
  D' \ar[l]|{\bx{g'}} \ar[d]^{q=\ol{n'}} \\ 
T(L) & 
  T(K) \ar[l]|{\bx{T(l)}} \\ 
} $$
\end{proof}


\end{document}